%% file: main.tex
\documentclass[prl,twocolumn,secnumarabic,balancelastpage,amsmath,amssymb,longbibliography,floatfix]{revtex4-2}

\usepackage{graphicx}
\usepackage{grffile}
\usepackage[dvipsnames]{xcolor}
\usepackage[hypertexnames=false,colorlinks,linkcolor=black,citecolor=black,urlcolor=black,filecolor=black]{hyperref}
\usepackage{amsmath, amssymb, amsthm}
\usepackage{booktabs}
\usepackage{multirow}
\usepackage[normalem]{ulem}
\usepackage{wasysym}
\usepackage[capitalize]{cleveref}
\usepackage{lipsum}
\usepackage{outlines}
\usepackage{enumerate}
\usepackage{setspace}
\usepackage{tabularx}
\usepackage{verbatim}
\usepackage{xcolor, soul}
\usepackage{physics}
\usepackage[export]{adjustbox}
\usepackage{tikz}
\usetikzlibrary{decorations.pathreplacing}
\usetikzlibrary{arrows,arrows.meta,calc,shapes.geometric,shapes.misc}
\tikzset{
  >=stealth',
  help lines/.style={dashed, thick},
  important line/.style={thick},
  connection/.style={thick, dotted},
}
\renewcommand\thesection{\arabic{section}}
\DeclareMathAlphabet{\mymathbb}{U}{BOONDOX-ds}{m}{n}
\usepackage{dsfont}
\usepackage{braket}
\usepackage{stmaryrd}
\hypersetup{
  pdfstartview={FitH},
  colorlinks=true,
  linkcolor=blue,
  citecolor=blue,
  filecolor=magenta,
  urlcolor=blue
}
\usepackage{algorithm}
\usepackage{algpseudocode}

\usepackage{wrapfig}
\newtheorem{thm}{Theorem}

\newtheorem{proposition}[thm]{Proposition}
\newtheorem{lemma}[thm]{Lemma}

\newcommand{\berninetysix}{$1.7_{-0.8}^{+1.8} \times 10^{-8}$}

\newcommand{\lerninetysix}{$2.9_{-1.5}^{+3.1} \times 10^{-11}$}
\newcommand{\nkdninetysix}{$\llbracket 1152,580,\leq 12\rrbracket$}
\newcommand{\beroneninetwo}{$1.5_{-1.0}^{+3.5} \times 10^{-10}$}

\newcommand{\leroneninetwo}{$1.3_{-0.9}^{+3.0} \times 10^{-13}$}
\newcommand{\nkdoneninetwo}{$\llbracket 2304,1156,\leq 14\rrbracket$}

\begin{document}

\title{Towards Ultra-High-Rate Quantum Error Correction\\ with Reconfigurable Atom Arrays}

\author{
  Chen Zhao$^{1,\dagger}$,
  Casey Duckering$^1$,
  Andi Gu$^2$,
  Nishad~Maskara$^{3,\dagger}$,
  and Hengyun~Zhou$^{1,4,\dagger}$
}

\affiliation{
  $^1$QuEra Computing Inc., 1380 Soldiers Field Road, Boston, MA 02135, US\\
  $^2$Department of Physics, Harvard University, Cambridge, MA 02138, USA\\
  $^3$Center for Theoretical Physics - a Leinweber Institute, Massachusetts Institute of Technology, Cambridge, MA 02139, USA\\
  $^4$Department of Electrical Engineering and Computer Science, Massachusetts Institute of Technology, Cambridge, MA 02139, USA\\
  $^\dagger$Email correspondence: czhao@quera.com, nishadma@mit.edu, hyzhou@mit.edu
}

\begin{abstract}
  Quantum error correction is widely believed to be essential for large-scale quantum computation, but the required qubit overhead remains a central challenge.
  Quantum low-density parity-check codes can substantially reduce this overhead through high-rate encodings, yet finite-size instances with practical logical error rates often achieve encoding rates only around or below $1/10$.
  Here, building on a recent ultra-high-rate construction by Kasai, we identify new structural conditions on the underlying affine permutation matrices that make encoding rates exceeding $1/2$ compatible with efficient implementation on reconfigurable neutral atom arrays.
  These conditions define a co-designed family of ultra-high-rate quantum codes that supports efficient syndrome extraction and atom rearrangement under realistic parallel control constraints.
  Using a hierarchical decoder with high accuracy and good throughput, we study the performance under a circuit-level noise model with $p=0.1\%$, achieving per-logical-per-round error rates of \leroneninetwo{} with a \nkdoneninetwo{} code and \lerninetysix{} with a \nkdninetysix{} code.
  We compare these codes against a heuristic Pareto frontier for finite-blocklength codes relating block length, encoding rate, and logical error rates, and find that our codes lie near the frontier.
  These results approach the teraquop regime, highlighting the promise of this code family for practical ultra-high-rate quantum error correction.

\end{abstract}

\maketitle

\section{Introduction}
\label{sec:intro}

Quantum error correction is essential for large-scale quantum computation, but the required qubit overhead remains a central obstacle.
For example, the celebrated surface code is projected to require several hundred to a thousand physical qubits per logical qubit~\cite{kitaev2003fault,dennis2002topological,fowler2012surface}.
Quantum low-density parity-check (qLDPC) codes provide a promising route to reducing this overhead by sharing the error correction across many logical qubits within a block~\cite{tillich2014quantum,breuckmann2021quantum,panteleev2022asymptotically,zhou2025opportunities}.
However, existing constructions that reach distances required for large-scale operations, while achieving asymptotically constant rate, often have rates around or below $1/10$~\cite{bravyi2024high,xu2024constant,tremblay2022constant,panteleev2019degenerate,breuckmann2017hyperbolic,higgott2024constructions,fahimniya2023fault,scruby2024high,lin2024quantum}.

Recent work by Kasai introduced an ultra-high-rate construction based on affine permutation matrices (APMs), achieving encoding rates near $1/2$ and large distances above 30 at block lengths around 9000~\cite{kasai2026breaking}.
This is a significant step towards the regime long familiar in classical LDPC coding~\cite{richardson2008modern}, where rates are often above $1/2$, sometimes as high as 237/256~\cite{etsi}. These codes are of great practical importance, forming the backbone of internet communication (5G, Wi-Fi, etc.) by enabling efficient and robust information transfer over noisy channels.
Similarly, quantum codes with rate $\geq 1/2$ will likely be a key ingredient in a low-overhead fault-tolerant architecture.
However, given that code properties generally scale with block length, it was unclear whether Kasai's APM-based paradigm could yield competitive performance with current experimental systems of around 1000 qubits~\cite{pause2024supercharged,manetsch2024tweezer,norcia2024iterative,ibm2023condor}, under more complex circuit-level noise models and realistic hardware layouts.

In this work, we show that ultra-high-rate quantum error correction can achieve competitive performance in experimentally-relevant regimes.
Starting from Kasai’s affine-permutation framework, we identify new structural conditions that preserve the code performance while admitting efficient implementation on reconfigurable neutral atom arrays~\cite{bluvstein2022quantum,bluvstein2024logical}.
This yields code instances with encoding rates exceeding $1/2$ at block sizes around $1000$ qubits, together with shallow syndrome extraction schedules based on simple orbit-wise shifts and small inter-orbit permutations.
To access the very low logical error rates relevant for long-lived memories, we further develop a hierarchical decoder that combines the throughput of belief propagation with an exact integer-programming fallback~\cite{delfosse2020hierarchical,poulin2008on,muller2025improved}.
Under a circuit-level noise model at $p=10^{-3}$ without idling noise, we directly observe average per-logical-per-round error rates of \lerninetysix{} and \leroneninetwo{} for a \nkdninetysix{} code and a \nkdoneninetwo{} code, respectively, bringing this family close to the teraquop regime.
We further derive heuristic finite-blocklength estimates of the tradeoff between block length and encoding rate at a given logical error rate, finding that our code constructions lie close to the Pareto frontier.
Taken together, these results show that ultra-high-rate quantum error-correcting codes can deliver substantial space-overhead reductions in practically relevant regimes, paving the way toward more efficient fault-tolerant quantum computers.

\section{Hardware co-designed code construction}
\label{sec:construction}
Reconfigurable neutral-atom arrays are a promising platform for fault-tolerant quantum computation because they combine long coherence times, high-fidelity control, and programmable non-local connectivity through coherent atom transport~\cite{bluvstein2022quantum,bluvstein2024logical,graham2022multi1,jenkins2022ytterbium,manetsch2024tweezer,zhang2025leveraging,sales2025experimental}. This flexible connectivity makes them natural candidates for implementing quantum LDPC codes, which leverage non-local circuits to reduce the space (qubit) overhead of quantum error correction.
At the same time, the native transport primitives are not fully arbitrary: common optical control tools such as crossed acousto-optic deflectors (AODs) naturally support parallel row- and column-wise motion with a product structure. The central question is therefore not only whether a code family has good parameters, but whether the corresponding reconfiguration patterns can be efficiently compiled into the structured movement primitives available in hardware~\cite{xu2024constant,viszlai2023matching}.

To achieve high encoding rate, a particularly attractive starting point is the recent ultra-high-rate Calderbank-Shor-Steane (CSS) construction introduced by Kasai~\cite{kasai2026breaking}, which showed that one can achieve encoding rates near $1/2$, at block lengths such as $\llbracket n=9216,k=4612,d\leq 32\rrbracket$, where $n$, $k$, and $d$ are the number of data qubits, the number of logical qubits, and the code distance.
The CSS code is described by $X$- and $Z$-stabilizer check matrices $H_X$, $H_Z$, which are each built from $P\times P$ permutation-matrix blocks $F_i$, $G_i$ arranged in a block-circulant pattern (Fig.~\ref{fig:pipeline} and Appendix~\ref{app:code_construction}, Eqs.~(\ref{eq:kasai_x})--(\ref{eq:kasai_z})).
The key innovation is that these blocks are \emph{affine} permutations,
\[
  x \mapsto ax+b \pmod P
\]
with $\gcd(a,P)=1$,
rather than purely circulant shifts $x \mapsto x+b \pmod P$.
Because affine permutations form a non-Abelian (non-commuting) group, they enlarge the design space and can improve girth and distance while retaining a high encoding rate.

To measure stabilizers, we use 12 blocks of size $P$ to encode the data qubits and 6 blocks of size $P$ to encode $X$ and $Z$ ancilla qubits.
As illustrated in Fig.~\ref{fig:atom_move}(a), each stabilizer is measured by a sequence of transversal gates between an ancilla block and successive data blocks, with the ancilla qubits reordered between each step.
The block-circulant structure of $H_X$ and $H_Z$ ensures that the same transition permutation is shared across multiple blocks, so all three rows of stabilizers can be measured in parallel.
The rearrangement from step $i$ to step $j$ requires applying the transition permutation $T_{ij}=F_jF_i^{-1}$ (and similarly $F_jG_i^{-1}$, $G_jG_i^{-1}$ for the $G_i$ blocks).
In general, compiling an arbitrary affine permutation into the parallel row- and column-wise moves supported by AODs requires $O(\log P)$ transport steps (Appendix~\ref{app:log_move}), which would substantially increase the syndrome extraction time.
We perform simultaneous syndrome extraction of both bases, with the schedule of the two bases staggered as in the left-right circuit approach of Refs.~\cite{strikis2026high,xu2024constant,menon2025magic,bravyi2024high}.
The syndrome extraction circuit can be further reduced to CNOT depth 7 by employing Bell pairs~\cite{cross2024improved}.

To reduce the movement cost, we co-design the code so that every transition permutation can be compiled in a small number of moves.
The main insight is that the affine
permutations can be reorganized by orbits, such that the resulting movements are simple cyclic
shifts plus inter-cycle permutations.
The first step is to choose a reference affine permutation $A$ with large orbits as a design input---that is, we first decide on a set of simple movement patterns we want the hardware to support, and then construct the code to be compatible with those patterns.
Given $A$, we examine its orbit decomposition, where one orbit is
\[
  x \rightarrow Ax \rightarrow A^2x \rightarrow \cdots \rightarrow x.
\]
If the qubits are reordered according to these orbits, the action of $A$ becomes a simple cyclic shift within each orbit (Fig.~\ref{fig:atom_move}(b)).
We then require that all transition APMs $T_{ij}$ commute with $A$.
Because commuting maps share a common eigenbasis, each $T_{ij}$ in the reordered basis is guaranteed to decompose into a simple cyclic shift within each orbit of $A$ together with permutations between orbits (Fig.~\ref{fig:atom_move}(b)).
Thus, instead of constraining $F_i$/$G_i$ directly, we design the code by requiring the transition APMs to commute with $A$.
In some settings, such as the $P=192$ example below, we find that there are multiple independent reference APMs, which serve to synchronize the relative shifts among the individual orbits.
Consequently, each step only involves around $2-3$ moves, compared with $O(\log P)$ for an unconstrained affine permutation.
We explain these design considerations in more detail in Appendix~\ref{app:apm_compile}.

\begin{figure}
  \centering
  \includegraphics[width=\linewidth]{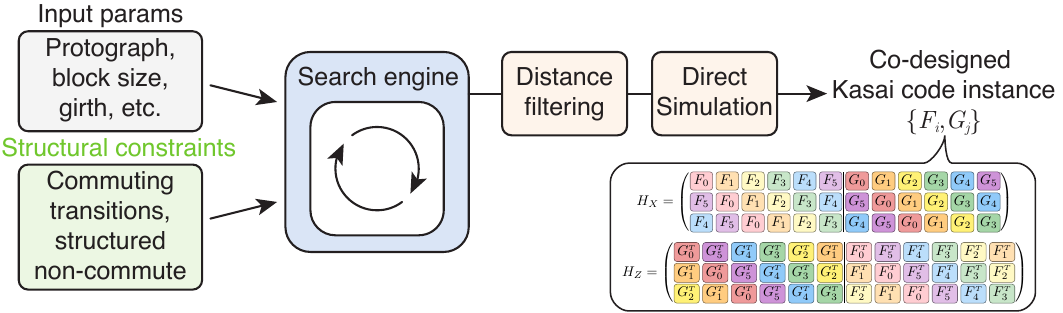}
  \caption{Illustration of code construction pipeline. We incorporate input parameters, together with structural constraints that improve code performance or simplify implementation, and search for candidates with competitive distance and performance. Inset: check matrices for Kasai's construction, color-coded to show the block-circulant structure.}
  \label{fig:pipeline}
\end{figure}

Applying these principles, we search for finite-size instances that satisfy several requirements (Fig.~\ref{fig:pipeline}): high encoding rate, guaranteed by construction; structured transition permutations based on the commuting-orbit description above; girth at least 6, to suppress short loops in the spatial decoding graph; and good finite-size distance, which we verify using open source distance-bounding tools~\cite{pryadko2022qdistrnd}.
Notably, unlike Kasai's original construction, we do not require girth-8, which imposes much stronger restrictions on the code and can be hard to achieve for small instances around $n\sim 1000$.
Empirically, we find that at moderate physical error rates, girth-6 and girth-8 instances achieve comparable performance under BP decoding with post-processing.
While constraining the non-commuting structure can in principle reduce distance, we nevertheless identify code instances with strong finite-size performance.

\begin{figure}
  \centering
  \includegraphics[width=\linewidth]{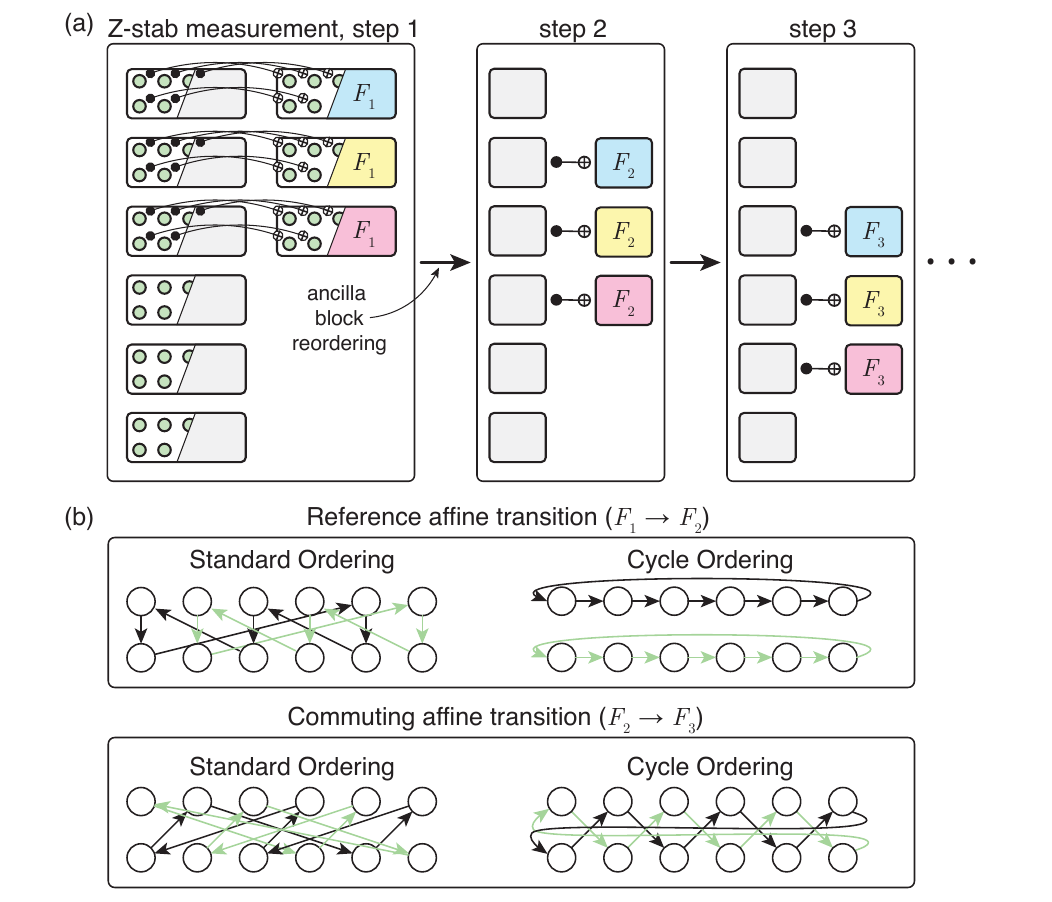}
  \caption{Illustration of atom rearrangement during syndrome extraction. (a) Each stabilizer can be measured by a sequence of transversal gates between an ancilla block of $P$ qubits, and each data block of $P$ qubits, by ordering the ancilla based on the affine permutations $F_1, F_2, ...$.
    In between transversal gates, the ancilla qubits (or data qubits) are permuted to match the target ordering at the next step.
    The block circulant structure allows all three rows of stabilizers to be measured in parallel, with identical ordering and permutations between steps.
    (b) In general, the structure of an affine permutation is complex, which generates compilation overhead when decomposed into parallel moves supported by neutral-atoms with crossed AODs. However, each APM can be decomposed into its cycle basis, and qubits can be ordered accordingly.
    Crucially, by appropriately co-designing the code, we can guarantee that all transitions in the measurement procedure commute with the \textit{reference}, and therefore map to simple column shifts and row permutations.
    Here, the reference is taken to be the first transition $F_1 \rightarrow F_2$, but this is not necessary.
  }
  \label{fig:atom_move}
\end{figure}

We focus on block lengths around several hundred to a thousand logical qubits, relevant to many proposed algorithms for ``utility-scale'' quantum computing~\cite{gidney2019how,reiher2017elucidating,lee2021even}.
We construct a \nkdninetysix{} code with $P=96$, in which all transition APMs commute with a reference APM with 3 length-32 orbits.
This results in a $3\times 32$ layout with at most one horizontal shift and one vertical shift or swap for each step.
We construct a \nkdoneninetwo{} code with length-32 orbits, as well as movement-compatible instances with $\llbracket 2304,1156,\leq 16\rrbracket$ and $\llbracket 4608,2308,\leq 22\rrbracket$. The same methodology can also be applied to larger code instances or other encoding rates.
Further details and additional examples are discussed in Appendix~\ref{app:code_construction}.
Using demonstrated experimental parameters for atom movement speeds~\cite{bluvstein2022quantum,zhou2025resource}, we estimate that one round of syndrome extraction can be completed in around $13.3\,\mathrm{ms}$ ($16.9\,\mathrm{ms}$) for the $P=96$ ($P=192$) instances, using two pairs of AODs, or around $8.3\,\mathrm{ms}$ ($9.9\,\mathrm{ms}$) using four pairs of AODs.
We estimate that the use of Bell checks to reduce syndrome extraction depth can reduce the movement time by around 25\% when using four pairs of AODs, and the use of improved movement schedules such as shortcuts to adiabaticity~\cite{hwang2025fast,cicali2025fast} have the potential to reduce movement times by $2-3\times$, bringing the syndrome extraction cycle to the $2-4$ ms range.
As we choose a long layout with horizontal dimension 32 for both code instances, the rearrangement time is comparable between the two codes.
We emphasize that while we have identified some structures that facilitate the co-design of ultra-high-rate quantum codes with hardware implementation, further improvements of the design criteria are likely possible.

\section{Noise simulations and hierarchical decoding}
\label{sec:decoding}

We now assess the performance of the constructed code instances under both phenomenological and circuit-level noise models, with a particular focus on physical error rates around 0.1\% that are common hardware targets.
The large number of logical qubits increases the entropic contribution to the logical error rate, resulting in a steeper logical error scaling near the threshold~\cite{gu2026scalable,komoto2024quantum,panteleev2019degenerate}, so we choose to probe the target error regime via direct simulation rather than extrapolation~\cite{bravyi2013simulation,beverland2025fail}.
Because the logical error rates of interest are extremely small, direct Monte Carlo simulation is only practical if the decoder is both accurate and fast.
To access this regime, we employ a hierarchical decoding pipeline~\cite{delfosse2020hierarchical} that combines the throughput of belief propagation~\cite{poulin2008on} (BP) with the accuracy of most-likely-error (MLE) decoding~\cite{landahl2011fault,cain2024correlated,beni2025tesseract} on a small subset of hard instances.

Our decoder proceeds in three stages.
We first apply a fast BP decoder (tier 1, $T_1$).
If BP does not converge, we invoke a relay-BP decoder~\cite{muller2025improved} (tier 2, $T_2$), and only if this also does not converge do we fall back to an integer-programming decoder (tier 3, $T_3$).
Empirically, we find that most decoder failures, especially in the circuit-level setting, manifest as non-convergence rather than as converged outputs in the wrong logical class.
This provides a reliable trigger for escalating to the next decoding stage.
As a result, the vast majority of shots are handled by the first two stages, while only a small fraction require the exact decoder (Fig.~\ref{fig:simulation}(a)).
In practice, this yields highly accurate decoding performance with a runtime only modestly above that of BP alone.
Additional details and timing estimates are given in Appendix~\ref{app:hierarchical_decode}.

\begin{figure}
  \centering
  \includegraphics[width=\linewidth]{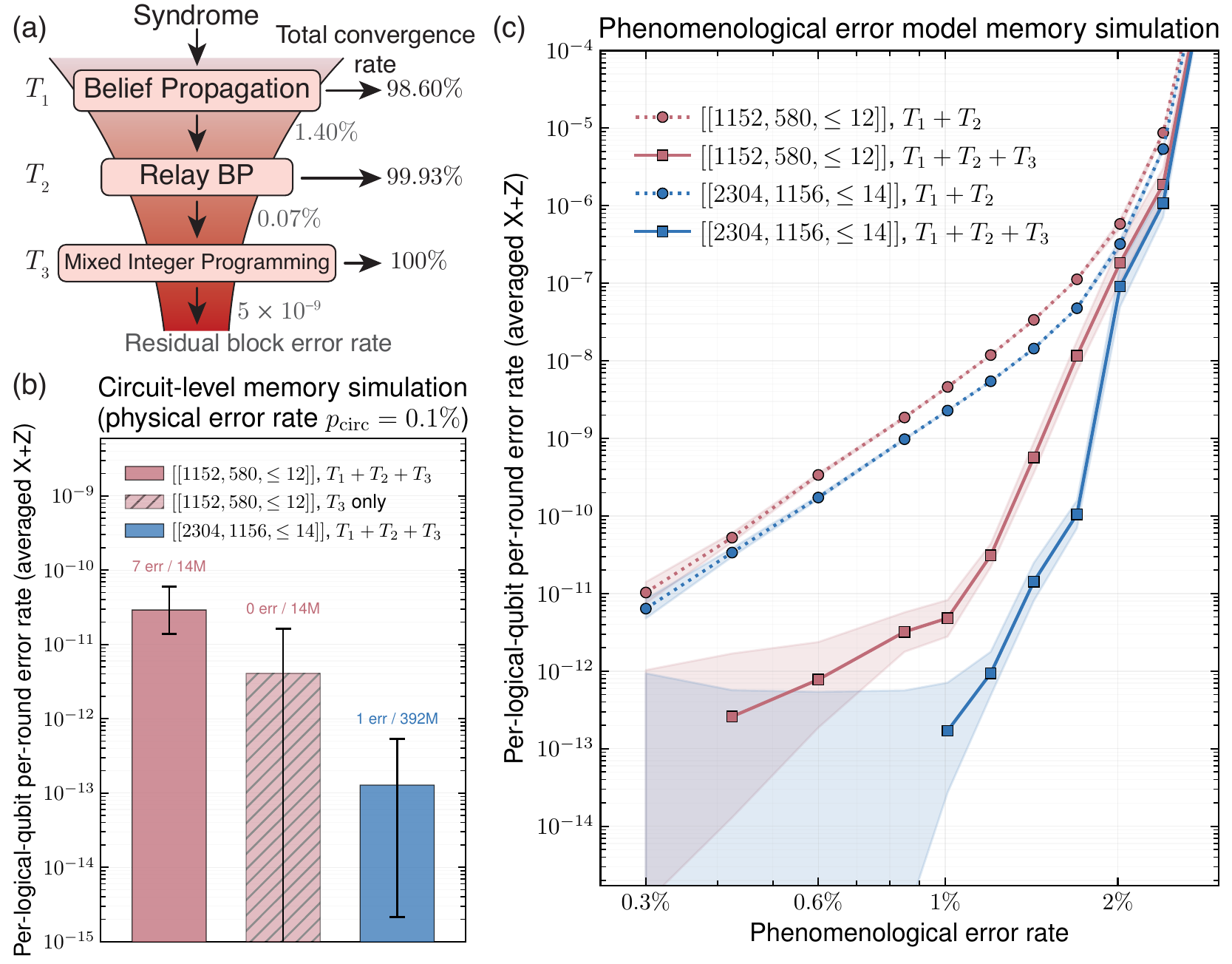}
  \caption{Decoder and simulation results. (a) Hierarchical decoder architecture with three tiers, showing the convergence rate and residual block error rate at each tier from circuit-level simulations of the \nkdoneninetwo{} code. (b) Circuit-level performance of quantum memory with $r=32$ rounds of \nkdninetysix{} and \nkdoneninetwo{} codes at $p_{\mathrm{circ}}=0.1\%$. In the ``$T_3$ only'' case, the logical error rate is estimated by redecoding all failures from earlier stages $T_1 + T_2$ using the $T_3$ decoder, where we confirm that these errors are correctable with the $T_3$ decoder. (c) Phenomenological error simulation results of quantum memory with $r=32$ rounds with various error rates. The shaded regions indicate $95\%$ Clopper-Pearson confidence intervals, while those without data points indicate cases where we were unable to observe any logical errors.}
  \label{fig:simulation}
\end{figure}

We first study a phenomenological noise model including data qubit errors and ancilla measurement errors, in order to map out the logical error scaling over a broad range of physical error rates.
More details of the error model are given in Appendix~\ref{app:simulation}.
We perform memory simulations in the \(X\) and \(Z\) basis for the \nkdninetysix{} and \nkdoneninetwo{} codes, and show the average logical error rate as a function of physical error rate in Fig.~\ref{fig:simulation}(c).
For each code, we perform decoding with 32 syndrome rounds, and we show the results with ($T_1+T_2+T_3$) and without ($T_1+T_2$) the final integer programming decoder stage.
The fallback decoder substantially improves performance in the low-\(p\) regime, and the curves exhibit a rapid initial decrease in logical error rate, consistent with the waterfall region due to the large number of logical operators for high-rate codes~\cite{panteleev2019degenerate,komoto2024quantum,gu2026scalable}.
At phenomenological error rates below $1\%$, the combined decoder shows a slower decay, limited by residual failures from the first two stages.
These phenomenological simulations provide a useful proxy for comparing candidate code instances and identifying the operating regime relevant to deep logical circuits.

We then directly test the constructed codes under a circuit-level noise model at the commonly assumed physical error rate \(p=0.1\%\).
Motivated by the long coherence times of reconfigurable neutral atom and trapped ion systems, and by the parallelizable structure of the syndrome extraction circuit, we neglect idling errors in these simulations.
For simplicity, we extract \(X\)- and \(Z\)-type syndromes sequentially in our simulations, with the ordering within each basis given by a coloration circuit~\cite{tremblay2022constant}.
On the \nkdninetysix{} code, we observe 7 logical errors out of 13.5 million shots, where all errors occur as converged errors in earlier stages but are correctable with the integer programming decoder.
We include these as errors in our accounting, as these failures will not be heralded, but note that future improvements to early stage decoders may be able to more accurately herald them.
This corresponds to a block error probability per round of \berninetysix{} and an average error rate of \lerninetysix{} per logical qubit per round.
On the \nkdoneninetwo{} code with 32 syndrome extraction rounds, we observe 1 logical error out of 392 million shots, corresponding to a block error probability per round of \beroneninetwo{} for the full simulation and an average error rate of \leroneninetwo{} per logical qubit per round.
The results place these codes close to the teraquop regime, approaching the fidelity required for large-scale quantum algorithms.

\section{Pareto frontier of ultra-high-rate codes}
\label{sec:pareto}

Having demonstrated strong performance at experimentally-relevant block lengths, we now examine the tradeoff between block length, encoding rate, and logical error rate.
At fixed physical error rate and target logical error rate, one expects the best achievable encoding rate to improve with block length.
Intuitively, larger blocks allow redundancy to be distributed more efficiently across many qubits, so that the overhead required to suppress logical errors to a given level is amortized more effectively.

A natural way to quantify this tradeoff is through finite-blocklength bounds~\cite{polyanskiy2010channel,tomamichel2016quantum}.
In classical coding theory, such bounds estimate the best achievable encoding rate at fixed blocklength, physical error rate, and target decoding error probability, and provide tight characterizations of achievable parameters at block lengths as few as a few hundred bits~\cite{polyanskiy2010channel}.
They are based on a finite size variance correction to the Hashing bound, and typically follow a Gaussian shape coming from the addition of many small error contributions close to the threshold.
Quantum analogues of these bounds have been developed for simple channels~\cite{tomamichel2016quantum}, but even for the standard depolarizing channel, the exact capacity is not known.
We therefore use the quantum hashing bound~\cite{bennett1996mixed}, commonly used as a proxy to judge asymptotic performance, and apply the same finite-blocklength estimation procedure heuristically to obtain guide curves for the quantum setting.
As shown in the appendix, the results scale as
\begin{align}
R^{*}(n,p,\varepsilon) \approx  R_H(p) -
  \sqrt{\frac{V(p)}{n}}\;\Phi^{-1}(1-\varepsilon),
\label{eq:finite}
\end{align}
where the depolarizing channel Hashing rate $R_H(p)=1-h(p)-p\log_2 3$ with $h(p)=-p\log_2 p-(1-p)\log_2(1-p)$ being the binary entropy, $V(p) = (1-p)\bigl(\log_2(1-p)\bigr)^2 +
p\left(\log_2\tfrac{p}{3}\right)^2 - \bigl(h(p) + p\log_2 3\bigr)^2$ being the dispersion term that captures the variance, and $\Phi(x)$ is the cumulative distribution function of a normal distribution.
It is worth noting that these curves should not be interpreted as rigorous bounds.
Rather, they provide a useful estimate of the scaling expected at moderate block length, and therefore a practical reference against which to judge our code families.

Figure~\ref{fig:pareto}(a) compares several quantum code families against these estimates.
We plot encoding rate as a function of block length for codes that either have been directly simulated (solid marker edges) or are extrapolated (dashed marker edges) to achieve per-logical-per-round error rates of $10^{-9}$ and $10^{-12}$ under circuit-level noise with $p_{\mathrm{circ}}=0.1\%$~\cite{bravyi2024high,webster2026pinnacle,fowler2012surface,cain2026shor,kasai2026breaking}.
As a reference, we overlay heuristic finite-blocklength estimates for the code-capacity depolarizing channel at $p=3\%$, chosen as a representative noise level because circuit-level thresholds are generally lower than code-capacity thresholds.
The circuit-level performance will depend on many factors, including the noise model, so these curves should be interpreted as rough estimates instead of tight bounds.
For this value of $p$, the asymptotic hashing bound gives an encoding rate of approximately $0.758$, whereas the finite-blocklength estimate is substantially tighter, yielding rates of only about $0.55$ at $n\sim 1000$ and about $0.1$ at $n\sim 100$, illustrating the importance of finite size corrections.

Two features of Fig.~\ref{fig:pareto}(a) are particularly notable.
First, the Gross code family~\cite{bravyi2024high} and the APM-based construction studied here lie closest to the estimated frontier in their respective block-length regimes, indicating that they are competitive realizations of high-rate quantum error correction.
Second, there is a noticeable gap in the regime of a few hundred physical qubits, where the curves suggest that encoding rates approaching $1/3$ should be achievable, yet no current construction is known to combine such block lengths with comparable logical performance.
This intermediate regime therefore appears to be a promising target for future code design.

To compare our codes against the finite-size benchmark within a matched error model, we additionally simulate them under code-capacity depolarizing noise.
For these simulations, we use a BP decoder with an integer-programming-based fallback, where the integer program is allowed to return its best feasible solution within a fixed runtime budget rather than being required to certify optimality.
Figure~\ref{fig:pareto}(b) shows that the numerical results follow the same qualitative trend as the finite-blocklength estimates, but with a residual gap of roughly a factor of two in the depolarizing error rate required to reach a given target logical error rate.
This discrepancy may indicate room for further improvement in the code construction or decoder, or it may reflect the need for sharper quantum-specific finite-blocklength estimates.
\newline

\begin{figure}
\centering
\includegraphics[width=\linewidth]{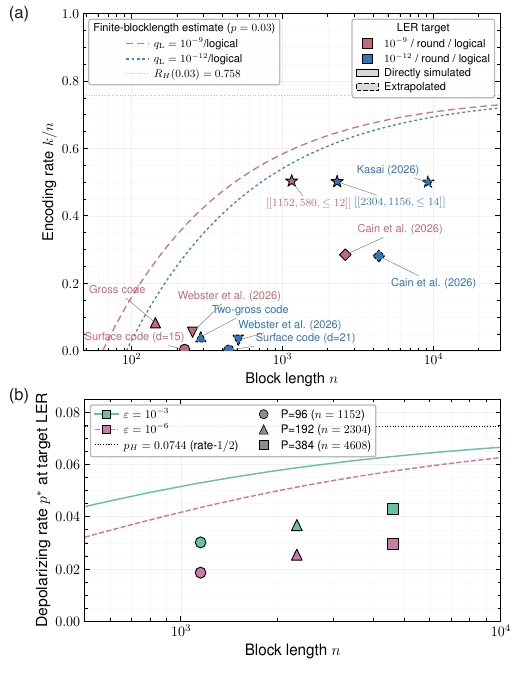}
\caption{Tradeoff between code performance and block length. (a) Encoding rate versus block length for several code instances capable of reaching per-logical-per-round error rates of $10^{-9}$ (red) or $10^{-12}$ (blue) under circuit-level noise with $p_{\mathrm{circ}}=0.1\%$. Solid markers denote codes directly verified by simulation, while dashed markers denote codes whose performance is extrapolated. (b) Depolarizing error rate required to achieve target code-capacity logical error rates of $10^{-3}$ and $10^{-6}$ as a function of block length $n$. The numerical results exhibit a modest gap relative to the corresponding finite-blocklength estimates. In both panels, the dashed curves show heuristic finite blocklength estimates based on the hashing bound for the code-capacity depolarizing channel.}
\label{fig:pareto}
\end{figure}

\section{Discussion}
\label{sec:discussion}

We have introduced a hardware-co-designed family of ultra-high-rate CSS codes that admits efficient implementation on reconfigurable atom arrays.
Combined with a hierarchical decoder, these codes achieve circuit-level logical error rates close to the teraquop regime at a physical error rate of $p=0.1\%$.
A heuristic finite-blocklength comparison further suggests that these constructions lie close to a practical Pareto frontier in their target block-length regime.
Taken together, these results show that ultra-high-rate quantum error correction can achieve competitive performance in regimes relevant to many proposed utility-scale algorithms.

There remains substantial room to improve both the code construction and the decoder.
On the coding side, a broader exploration of the affine-permutation design space, as well as combinations with other constructions such as lifted products and balanced products~\cite{panteleev2019degenerate,breuckmann2020balanced}, may reveal a more favorable Pareto frontier between rate, distance, decoding performance, and implementation cost.
Our finite-blocklength comparison suggests that the intermediate regime of a few hundred physical qubits may be a particularly promising target for such searches.
On the decoding side, the present hierarchical pipeline is designed primarily to accelerate Monte Carlo simulations, but it may also be relevant for real-time decoding in slower platforms such as neutral atoms and trapped ions.
As discussed in Appendix~\ref{app:hierarchical_decode}, the overall runtime overhead remains modest even when the fallback decoder is much slower, provided that fallback is invoked sufficiently rarely.
This suggests improving both the accuracy of the early-stage decoders and their ability to reliably herald cases requiring fallback.
Recent machine-learning decoders have also demonstrated both high speed and good accuracy, providing a promising alternative to our approach~\cite{bausch2024learning,bonilla2025neural,gu2026scalable,blue2025machine}.

The present work focuses on memory performance, and an important next step is to extend these ideas to a full architecture for fault-tolerant quantum computation~\cite{yoder2025tour,webster2026pinnacle,he2025extractors}.
Recent works have developed generalized surgery methods for Pauli-based computation on qLDPC codes~\cite{cohen2022low,cross2024improved,ide2025fault,williamson2024low,swaroop2024universal,zhang2024time}, with space overhead scaling nearly linearly in the weight of the target logical operator.
A key challenge will therefore be to identify low-weight logical bases for these codes and to construct optimized surgery gadgets.
It will also be important to explore more parallelized and time-efficient surgery schemes~\cite{zhang2024time,baspin2025fast,cowtan2025fast,xu2025batched,chang2026constant,zheng2025high}.
In such architectures, the space-time cost of magic-state production and consumption is likely to become a major bottleneck, making more efficient magic-state production and consumption especially valuable~\cite{gidney2024magic,sahay2025fold,daguerre2024code}.
More broadly, it will be interesting to understand whether the same ultra-high-rate philosophy pursued here for quantum memory can be extended to logical computation as well.
We expect that further progress in code design, logical operations, decoding, and hardware co-design will bring high-rate fault-tolerant quantum computation closer to practical realization.

\begin{acknowledgments}
  We acknowledge helpful discussions with Pablo Bonilla, Hossein Dehghani, Andrei Diaconu, Arpit Dua, Michael Gullans, Zhiyang He, Nazli Koyluoglu, Mikhail Lukin, Rohan Mehta, Varun Menon, Daniel Tan, Shengtao Wang, Qian Xu.
  This work was supported by IARPA and the Army Research Office under the Entangled Logical Qubits program (Cooperative Agreement Number W911NF-23-2-0219), the DARPA MeasQuIT program
  (HR0011-24-9-0359) and the Innovate UK grant (10179725) as part of the QNABLE project.
  NM was supported by NSF grant PHY-2325080, the U.S. Department of Energy, Office of Science, National Quantum Information Science Research Centers, Quantum Systems Accelerator (Award No. DE-SCL0000121), and by the MIT Center for Theoretical Physics (report number MIT-CTP/6027).

  Competing interests: C.Z., C.D. are employees and shareholders of QuEra Computing. H.Z. is a shareholder of QuEra Computing. A.G. and N.M. have served as consultants for QuEra Computing.

  Note added: During the preparation of this manuscript, we became aware of independent work~\cite{cain2026shor} achieving rates of $\lesssim 0.3$ with $\sim 2600$ qubits and analyzing associated architectures.
\end{acknowledgments}

\bibliography{main}

\include{appendix}

\end{document}

%% file: appendix.tex
    \appendix
\renewcommand{\thesubsection}{\thesection.\arabic{subsection}}
\makeatletter
\@addtoreset{figure}{section}
\@addtoreset{table}{section}
\@addtoreset{equation}{section}
\makeatother
\renewcommand{\thefigure}{\thesection\arabic{figure}}
\renewcommand{\thetable}{\thesection\arabic{table}}
\renewcommand{\theequation}{\thesection\arabic{equation}}
\renewcommand{\theHfigure}{appendix.\thesection.\arabic{figure}}
\renewcommand{\theHtable}{appendix.\thesection.\arabic{table}}
\renewcommand{\theHequation}{appendix.\thesection.\arabic{equation}}
\section{Details of Code Construction and Finite-Size Search Principle}
\label{app:code_construction}
This appendix supplements the code-construction discussion in Sec.~\ref{sec:construction}.
We first describe the Kasai check matrices and their parent matrix viewpoint, then explain why the truncation used to obtain a high-rate code does not introduce the most obvious low-weight logical operators, and finally summarize the finite-size search criteria used to identify the instances quoted in the main text.

\paragraph{Construction.}
The Kasai code is a CSS code~\cite{calderbank1996good,steane1996error}, defined by the check matrices of $X$ and $Z$ stabilizers,
\begin{widetext}
  \begin{align}
    H_X&=\left(
      \begin{array}{cccccc|cccccc}
        F_0 & F_1 & F_2 & F_3 & F_4 & F_5 & G_0 & G_1 & G_2 & G_3 & G_4 & G_5 \\
        F_5 & F_0 & F_1 & F_2 & F_3 & F_4 & G_5 & G_0 & G_1 & G_2 & G_3 & G_4 \\
        F_4 & F_5 & F_0 & F_1 & F_2 & F_3 & G_4 & G_5 & G_0 & G_1 & G_2 & G_3
    \end{array}\right),\label{eq:kasai_x}\\
    H_Z&=\left(
      \begin{array}{cccccc|cccccc}
        G_0^{T} & G_5^{T} & G_4^{T} & G_3^{T} & G_2^{T} & G_1^{T} & F_0^{T} & F_5^{T} & F_4^{T} & F_3^{T} & F_2^{T} & F_1^{T} \\
        G_1^{T} & G_0^{T} & G_5^{T} & G_4^{T} & G_3^{T} & G_2^{T} & F_1^{T} & F_0^{T} & F_5^{T} & F_4^{T} & F_3^{T} & F_2^{T} \\
        G_2^{T} & G_1^{T} & G_0^{T} & G_5^{T} & G_4^{T} & G_3^{T} & F_2^{T} & F_1^{T} & F_0^{T} & F_5^{T} & F_4^{T} & F_3^{T}
    \end{array}\right)\label{eq:kasai_z}.
  \end{align}
\end{widetext}
Here, $F_i$ and $G_i$ are $P\times P$ affine permutation matrices (APMs), corresponding to maps
\[
  x \mapsto ax+b \pmod P,
\]
with $\gcd(a,P)=1$ so that the map is invertible.

The displayed checks in Eqs.~(\ref{eq:kasai_x}) and (\ref{eq:kasai_z}) are obtained by retaining only the $J=3$ active block rows of the following $6\times 12$ parent matrices:
\begin{widetext}
  \begin{align}
    \hat{H}_X&=\left(
      \begin{array}{cccccc|cccccc}
        F_0 & F_1 & F_2 & F_3 & F_4 & F_5 & G_0 & G_1 & G_2 & G_3 & G_4 & G_5 \\
        F_5 & F_0 & F_1 & F_2 & F_3 & F_4 & G_5 & G_0 & G_1 & G_2 & G_3 & G_4 \\
        F_4 & F_5 & F_0 & F_1 & F_2 & F_3 & G_4 & G_5 & G_0 & G_1 & G_2 & G_3 \\
        F_3 & F_4 & F_5 & F_0 & F_1 & F_2 & G_3 & G_4 & G_5 & G_0 & G_1 & G_2 \\
        F_2 & F_3 & F_4 & F_5 & F_0 & F_1 & G_2 & G_3 & G_4 & G_5 & G_0 & G_1 \\
        F_1 & F_2 & F_3 & F_4 & F_5 & F_0 & G_1 & G_2 & G_3 & G_4 & G_5 & G_0
    \end{array}\right),\label{eq:kasai_parent_x}\\
    \hat{H}_Z&=\left(
      \begin{array}{cccccc|cccccc}
        G_0^{T} & G_5^{T} & G_4^{T} & G_3^{T} & G_2^{T} & G_1^{T} & F_0^{T} & F_5^{T} & F_4^{T} & F_3^{T} & F_2^{T} & F_1^{T} \\
        G_1^{T} & G_0^{T} & G_5^{T} & G_4^{T} & G_3^{T} & G_2^{T} & F_1^{T} & F_0^{T} & F_5^{T} & F_4^{T} & F_3^{T} & F_2^{T} \\
        G_2^{T} & G_1^{T} & G_0^{T} & G_5^{T} & G_4^{T} & G_3^{T} & F_2^{T} & F_1^{T} & F_0^{T} & F_5^{T} & F_4^{T} & F_3^{T} \\
        G_3^{T} & G_2^{T} & G_1^{T} & G_0^{T} & G_5^{T} & G_4^{T} & F_3^{T} & F_2^{T} & F_1^{T} & F_0^{T} & F_5^{T} & F_4^{T} \\
        G_4^{T} & G_3^{T} & G_2^{T} & G_1^{T} & G_0^{T} & G_5^{T} & F_4^{T} & F_3^{T} & F_2^{T} & F_1^{T} & F_0^{T} & F_5^{T} \\
        G_5^{T} & G_4^{T} & G_3^{T} & G_2^{T} & G_1^{T} & G_0^{T} & F_5^{T} & F_4^{T} & F_3^{T} & F_2^{T} & F_1^{T} & F_0^{T}
    \end{array}\right)\label{eq:kasai_parent_z}.
  \end{align}
\end{widetext}
If one were to delete rows from a generic stabilizer parity-check matrix without further structure, then the removed $X$ and $Z$ checks would naturally become candidate logical $X$ and $Z$ operators, so the distance could collapse to the weight of a deleted stabilizer.
Kasai's construction avoids this trivial failure mode by arranging the omitted rows so that they necessarily involve APM blocks that do not commute in the required way with the retained checks of the opposite basis.
As a result, the deleted rows are not themselves valid low-weight logical operators, and the high encoding rate from truncation can be retained without forcing an immediate low-distance obstruction.

\paragraph{Finite-size search.}
Our finite-size search starts from this truncated template but adds the implementation constraints described in Sec.~\ref{sec:construction}.
We start by choosing a reference APM $A$ with large orbits.
For example, in the $P=96$ case, the reference APM is composed of three orbits of size 32.
Then, during the code search, after selecting each permutation, we constrain all following APMs by requiring the transition permutations $T_{ij}=F_jF_i^{-1}, F_j G_i^{-1}, G_j G_i^{-1}$ to commute with $A$, where only transitions between neighboring APMs are considered.
This guarantees that after we transform to the reference's cycle basis, the transitions reduce to  simple orbit-wise shifts and small inter-orbit permutations (see \cref{app:apm_compile}).
This introduces additional constraints, in addition to girth $\geq 6$ that we impose.
We perform post-search filtering using finite-size distance-bounding tools and simulations of logical error rates under code capacity noise.
Note that in general, the orbit-wise shifts are not guaranteed to be homogeneous between rows. In practice, this would lead to serialization overhead, as each row would have to be addressed one after another.
Nevertheless, we find that by modifying the initial offset $s_r$ of each orbit, we can minimize the number of independent row-shifts that need to be applied.
Furthermore, as explained in \cref{app:apm_compile}, some of the resulting codes have additional structures that guarantee homogeneous shifts between rows.
We search across a few hundred seeds for each set of parameters.
A more extensive search may yield further improvements to the code parameters.

Table~\ref{tab:apm-params} lists representative instances.
The notation $\llbracket n,k,d\leq d_{\mathrm{ub}}\rrbracket$ reports a best-found upper bound on the distance rather than an exact distance determination.
The first two rows with $P=96$ and $P=192$ correspond to the two finite-size examples discussed explicitly in Sec.~\ref{sec:construction}.
The following rows provide additional instances with larger distance.

\begin{table*}[t]
  \centering
  \caption{APM CSS code parameters for selected codes at $P = 96$, $192$, and $384$.
    All codes use $J = 3$ active block rows, $L = 12$ block columns, girth $\geq 6$,
    and non-commutativity pairs $(0,3)$, $(1,2)$.
  The affine permutation maps are $f_i(x) = a_i x + b_i \pmod{P}$ and $g_i(x) = c_i x + d_i \pmod{P}$ for $i = 0, \ldots, 5$.}
  \label{tab:apm-params}
  \resizebox{1.4\columnwidth}{!}{%
    \begin{tabular}{@{}ccccccc@{}}
      \toprule
      $P$ & $\llbracket n, k, d \rrbracket$ & Rate & $i$ & $f_i(x) = a_i x + b_i$ & $g_i(x) = c_i x + d_i$ \\
      \midrule
      \multirow{6}{*}{96} &  \multirow{6}{*}{\nkdninetysix{}} & \multirow{6}{*}{0.503}
      & 0 & $5x + 41$   & $61x + 15$ \\
      & & & 1 & $85x + 77$  & $x + 24$   \\
      & & & 2 & $73x + 66$  & $89x + 62$ \\
      & & & 3 & $x + 0$     & $25x + 22$ \\
      & & & 4 & $x + 72$    & $85x + 93$ \\
      & & & 5 & $37x + 9$   & $25x + 78$ \\
      \midrule
      \multirow{6}{*}{192}  & \multirow{6}{*}{\nkdoneninetwo{}} & \multirow{6}{*}{0.502}
      & 0 & $71x + 127$  & $163x + 165$ \\
      & & & 1 & $97x + 80$   & $55x + 183$  \\
      & & & 2 & $67x + 117$  & $167x + 79$  \\
      & & & 3 & $163x + 165$ & $139x + 41$  \\
      & & & 4 & $25x + 60$   & $109x + 78$  \\
      & & & 5 & $187x + 33$  & $31x + 27$   \\
      \midrule
      \multirow{6}{*}{192} & \multirow{6}{*}{$\llbracket 2304, 1156, \leq 16 \rrbracket$} & \multirow{6}{*}{0.502}
      & 0 & $77x + 131$  & $169x + 138$ \\
      & & & 1 & $145x + 68$  & $157x + 135$ \\
      & & & 2 & $133x + 33$  & $125x + 143$ \\
      & & & 3 & $37x + 57$   & $145x + 148$ \\
      & & & 4 & $37x + 57$   & $121x + 78$  \\
      & & & 5 & $49x + 60$   & $121x + 126$ \\
      \midrule
      \multirow{6}{*}{384} & \multirow{6}{*}{$\llbracket 4608, 2308, \leq 20 \rrbracket$} & \multirow{6}{*}{0.501}
      & 0 & $23x + 321$  & $91x + 231$  \\
      & & & 1 & $205x + 178$ & $265x + 204$ \\
      & & & 2 & $217x + 324$ & $197x + 294$ \\
      & & & 3 & $223x + 237$ & $289x + 176$ \\
      & & & 4 & $301x + 66$  & $85x + 318$  \\
      & & & 5 & $67x + 3$    & $139x + 111$ \\
     \midrule
      \multirow{6}{*}{384} & \multirow{6}{*}{$\llbracket 4608, 2308, \leq 22 \rrbracket$} & \multirow{6}{*}{0.501}
      & 0 & $337x + 212$     & $37x + 201$    \\
      & & & 1 & $257x + 288$  & $x + 96$    \\
      & & & 2 & $37x + 105$    & $181x + 13$     \\
      & & & 3 & $229x + 249$    & $5x + 33$ \\
      & & & 4 & $145x + 132$    & $277x + 261$    \\
      & & & 5 & $157x + 39$    & $361x + 282$    \\
      \bottomrule
    \end{tabular}
  }
\end{table*}

\section{Details of Simulations}
\label{app:simulation}

In this section, we describe the details of the simulation we conducted in the main text.

\subsection{Circuits and noise models}
\label{app:circ-noise-model}
We focus on the $X$/$Z$ basis quantum memory simulation with $r=32$ rounds. More precisely, we initialize all physical qubits into $\ket{+}$/$\ket{0}$ states, followed by $r$ rounds of both $X$ and $Z$ stabilizer measurements. After that, we measure all physical qubits in $X$/$Z$ basis to determine if any logical errors happen.

For the circuit-level simulation, we assume that stabilizers are measured using the coloration syndrome extraction method for arbitrary CSS codes~\cite{tremblay2022constant}, where $X$ and $Z$ stabilizers are measured independently. The coloration algorithm results in a CNOT depth-12 circuit for both $X$ and $Z$ stabilizers, making the total depth of the circuit 24. It is worth noting that the depth of 24 can be reduced by a factor of 2 to 4 through various methods, such as left-right block syndrome extraction~\cite{strikis2026high,xu2024constant,menon2025magic,bravyi2024high} or Bell-pair-based syndrome extraction~\cite{cross2024improved}. We assume that the physical qubits can be initialized directly to the states $\ket{0}$ and $\ket{+}$ and measured in the Pauli $Z$ and $X$ bases. After initialization and prior to measurement of each qubit, a single-qubit depolarizing error with strength $p_{\mathrm{circ}}$ is applied. Additionally, after each two-qubit gate, including CNOT, a two-qubit depolarizing error with strength $p_{\mathrm{circ}}$ is applied. We assume no idling errors, motivated by the long coherence times of neutral atom and trapped ion systems. We leave more detailed analysis of realistic noise channels, incorporating the effect of idling and atom loss, to future work.

For the phenomenological simulation, we assume that the stabilizers are measured directly via Pauli product measurement with a probability $p_{\mathrm{meas}}$ of obtaining a wrong result. After each round of stabilizer measurement, all data qubits are subject to single-qubit depolarizing errors with strength $p_{\mathrm{data}}$. In each round of syndrome extraction, each data qubit will undergo 6 CNOT operations, while each ancilla qubit will undergo 12 CNOT operations. However, only $2/3$ of all depolarizing errors on average will cause a measurement flip on ancilla qubits. Therefore, we assume in addition that
\begin{equation}
  \frac{p_{\mathrm{data}}}{p_{\mathrm{meas}}}=\frac{6}{12\times 2/3} = \frac{3}{4},
\end{equation} and use a single parameter $p_{\mathrm{phenom}} = p_{\mathrm{data}}$ to control the error strength for our simulation.

\subsection{Decoders}
\label{app:decoders}

As described in Sec.~\ref{sec:decoding}, our hierarchical decoder consists of three tiers. We describe the implementation details of each tier below.

\paragraph{$T_1$ (belief propagation).}
For the phenomenological simulations, we use a quaternary belief propagation (BP) decoder that operates on the GF(4) Tanner graph, decoding $X$ and $Z$ errors jointly~\cite{poulin2008on,kuo2022exploiting,kuo2025generalized}.
For the circuit-level simulations, we use a separate binary BP implementation that operates on a single-basis detector error model: although both $X$ and $Z$ stabilizers are measured, detectors are constructed in only one basis.
Both implementations support memory BP, serial and alternating scheduling, sum-product and min-sum message-passing rules, and adaptive damping.

\paragraph{$T_2$ (relay BP).}
For the phenomenological simulations, we implement a relay variant built on top of the quaternary BP decoder from $T_1$, following the approach of Ref.~\cite{muller2025improved}.
For the circuit-level simulations, we use the open-sourced relay decoder of Ref.~\cite{muller2025improved} with the parameters listed in Table~\ref{tab:relay-params}. Since $T_1$ has already executed BP, the relay decoder skips its initial BP phase. Note that almost all parameters are the default ones from the package. We choose not to tune the relay parameters aggressively to avoid the failures that converged but wrong as we observed from the phenomenological simulation.

\begin{table}[h]
  \centering
  \caption{Parameters used for the open-sourced relay BP decoder~\cite{muller2025improved} in circuit-level simulations.}
  \label{tab:relay-params}
  \begin{tabular}{@{}ll@{}}
    \toprule
    Parameter & Value \\
    \midrule
    \texttt{num\_sets}    & 300 \\
    \texttt{set\_max\_iter} & 60 \\
    \texttt{gamma0}       & 0.1 \\
    \texttt{alpha\_iteration\_scaling\_factor} & 1.0 \\
    \texttt{stop\_nconv}  & 1 \\
    \texttt{precision}    & F32 \\
    \texttt{pre\_iter}    & 0 \\
    \texttt{seed}         & 0 \\
    \bottomrule
  \end{tabular}
\end{table}

\paragraph{$T_3$ (integer-programming MLE).}
The final fallback tier solves an exact most-likely-error (MLE) decoding problem via integer programming using the Gurobi solver~\cite{landahl2011fault,cain2024correlated,gurobi}. To keep the runtime manageable, the integer program is formulated using syndromes from a single basis, matching the logical observables of interest.

\section{Hierarchical Decoder and Real-Time Quantum Error Correction}
\label{app:hierarchical_decode}

The hierarchical decoder of Sec.~\ref{sec:decoding} is motivated primarily by simulation throughput, but the same staged architecture can also be employed for real-time quantum error correction.
In a real-time setting, we require that the decoder stack must sustain the incoming syndrome rate without an unbounded backlog and that the average reaction time across the whole circuit does not bottleneck execution~\cite{delfosse2020hierarchical,battistel2023real,das2020scalable,skoric2023parallel}.

Consider a hierarchy with $k$ tiers, where tier $i$ takes an average decoding time $t_i$ per invocation and passes unresolved instances to tier $i+1$ with escalation probability $r_i$.
The fraction of syndrome rounds that reach tier $i$ is
\begin{align}
  q_1 = 1, \qquad q_i = \prod_{j=1}^{i-1} r_j \quad (i\geq 2).
\end{align}
If all tiers share a single processor, the decoder performs an average of $\bar t = \sum_i q_i t_i$ of work per syndrome round, and must therefore satisfy $\bar t \lesssim T_{\mathrm{QEC}}$, where $T_{\mathrm{QEC}}$ is the time per syndrome extraction cycle, to keep up with the incoming syndrome rate.
BP handles every round, relay BP resolves most remaining non-converging cases, and only a small tail requires integer programming.
If the runtime ratio of the late stages to the early stages is much smaller than the ratio of non-fallback cases to fallback cases, then the average runtime is dominated by the early stages.

To make the BP and relay-BP timing concrete, we extrapolate from the recent FPGA implementation of Ref.~\cite{maurer2025real}, which realizes BP and relay-BP decoders for the gross code on a single FPGA.
The design completes one BP iteration in two FPGA clock cycles, giving a reported iteration time of $24~\mathrm{ns}$ for the gross $\llbracket 144,12,12\rrbracket$ code with sliding-window width $W=d=12$.
Average iteration counts range from of order $10$ (relay BP) to of order $20$ (standard BP) at $p\sim 10^{-3}$, and the $X$-stabilizer branch alone occupies roughly $52\%$ of the chip's LUTs.
Because the design is essentially spatially unrolled over the windowed Tanner graph, the dominant scaling parameter is the number of nodes per window, which grows as $nW\delta$ at fixed check weight, where $\delta$ is the data qubit degree.
Taking the gross-code decoder as the baseline, define
\begin{align}
  F \equiv \frac{nW \delta_{\text{Kasai}}}{144\cdot 12\delta_{\text{BB}}}.
\end{align}
Since $\delta_{\text{Kasai}}=\delta_{\text{BB}}=6$, for the Kasai instances at $W=d$, $F_{\llbracket 1152,580\rrbracket}=8.0$ and $F_{\llbracket 2304,1156\rrbracket}\approx 18.7$.
The factor $F$ thus captures the increase in the number of edges in the Tanner graph at fixed check weight.
Secondary effects, such as increased iteration counts due to larger graph diameter or memory-bandwidth limitations from larger graph size, could introduce additional overhead beyond the linear-in-$F$ estimate; the FPGA runtimes in Table~\ref{tab:fpga} should therefore be understood as order-of-magnitude estimates.

As an estimate, assume the reference fabric is held fixed and the larger decoding graph is time-multiplexed over it, so the per-iteration FPGA latency scales linearly in $F$, yielding $t_{\mathrm{iter}}\approx 192~\mathrm{ns}$ and $448~\mathrm{ns}$ per sliding-window iteration for our two codes.
Combining this with the average BP iteration counts observed in our circuit-level simulations at $p=10^{-3}$ ($\bar N_{\mathrm{iter}}^{T_1}=6$ and $8$, respectively) and amortizing over the $d$ rounds in each $W=d$ window gives per-round $T_1$ FPGA runtimes of $\bar N_{\mathrm{iter}}^{T_1}\,t_{\mathrm{iter}}/d$; together with the measured per-round CPU runtimes of each tier, this populates Table~\ref{tab:fpga}.
Since we did not directly measure iteration counts for the relay-BP stage, the $T_2$ FPGA entries are obtained by scaling the $T_1$ FPGA estimate by the corresponding CPU $T_2/T_1$ runtime ratio, under the assumption that relay BP incurs a similar per-iteration cost on FPGA as BP.
Under this conservative scaling, the $T_1$ BP stage runs at $\sim\!100$ to $\sim\!260\,\mathrm{ns}$ per syndrome round, and the $T_2$ relay-BP stage at $\sim\!1$ to $\sim\!20\,\mu\mathrm{s}$ per round, representing about five orders of magnitude speedup over the CPU baseline.
A more optimistic reading that scales the fabric with $nW$ would give substantially smaller per-iteration latencies, although this ignores inter-FPGA routing and signaling overheads.

Finally, the $T_3$ integer-programming tier delivers accurate logical-error estimates in simulation but with per-round CPU runtimes of several seconds (Table~\ref{tab:fpga}).
Approximate-optimal decoders, including the Tesseract decoder~\cite{beni2025tesseract,grbic2026accelerating} and recent neural-network decoders that approach its accuracy~\cite{gu2026scalable}, retain near-optimal logical-error performance at substantially lower and more predictable latencies, and are natural candidates for bounded-latency fallbacks in place of exact integer programming.
Extrapolating the $\sim\!40~\mu\mathrm{s}$ per-cycle GPU latency of the neural decoder of Ref.~\cite{gu2026scalable} on the gross code by the same factor $F$ gives roughly $320~\mu\mathrm{s}$ and $720~\mu\mathrm{s}$ per syndrome round for our two codes~--- roughly four orders of magnitude faster than the exact integer-programming baseline.

Even without GPUs, the CPU-only Gurobi $T_3$ fallback can still be manageable with further improvements.
When a decoding problem triggers the integer-programming tier, syndrome extraction continues uninterrupted while the solver runs.
For the $\llbracket 1152,580\rrbracket$ code, the $2.9\,\mathrm{s}$ CPU solve time spans roughly $350$ SE rounds ($T_{\mathrm{QEC}}=8.3\,\mathrm{ms}$); with $q_3=0.003\%$, the probability that any of those rounds also escalates to $T_3$ is $1-(1-q_3)^{350}\approx 1\%$.
For the $\llbracket 2304,1156\rrbracket$ code, improvements to either the early-tier decoders accuracy or the fallback decoders runtime can ensure that no backlog forms, as there will be an exponentially decreasing probability of producing another $T_3$ call.

Assuming FPGA acceleration for $T_1$ and $T_2$ and GPU acceleration for $T_3$ (the accelerator columns of Table~\ref{tab:fpga}), and weighting each tier's per-round time by its escalation fraction $q_i$, the total decoding work per round is $\bar t = \sum_i q_i t_i \approx 120\,\mathrm{ns}$ for the \nkdninetysix{} code and $\approx 1.0\,\mu\mathrm{s}$ for the \nkdoneninetwo{} code, well below the respective $T_{\mathrm{QEC}} = 8.3\,\mathrm{ms}$ and $9.9\,\mathrm{ms}$ syndrome-extraction cycles on our target neutral-atom architecture.

\begin{table*}[t]
  \centering
  \caption{Measured CPU runtimes and extrapolated accelerator runtimes for the three tiers of the hierarchical decoder on the two Kasai instances at $p=10^{-3}$. All runtimes are per QEC syndrome round (amortized over a $W=d$ sliding window), and $F\equiv nW/(144\cdot 12)$. $q_i$ is the fraction of syndrome rounds that reach tier $i$, from our circuit-level simulations; $q_1=1$ by construction. $T_1$ and $T_2$ FPGA entries extrapolate from the FPGA decoder of Ref.~\cite{maurer2025real}, using $\bar N_{\mathrm{iter}}^{T_1}\,t_{\mathrm{iter}}/d$ for $T_1$ and the observed CPU $T_2/T_1$ ratio for $T_2$. The $T_3$ GPU entry extrapolates the neural-network decoder of Ref.~\cite{gu2026scalable}, scaling its reported $\sim\!40\,\mu\mathrm{s}$ per QEC cycle on the gross $\llbracket 144,12,12\rrbracket$ code by $F$. $T_{\mathrm{QEC}}$ is the syndrome-extraction cycle time for each code on the neutral-atom architecture considered in the main text.}
  \label{tab:fpga}
  \resizebox{\textwidth}{!}{%
    \begin{tabular}{@{}lccccccccccc@{}}
      \toprule
      & & \multicolumn{3}{c}{$T_1$ (BP)} & \multicolumn{3}{c}{$T_2$ (relay BP)} & \multicolumn{3}{c}{$T_3$} & \\
      \cmidrule(lr){3-5} \cmidrule(lr){6-8} \cmidrule(lr){9-11}
      Code & $F$ & $\bar N_{\mathrm{iter}}^{T_1}$ & CPU & FPGA & $q_2$ & CPU & FPGA & $q_3$ & CPU & GPU & $T_{\mathrm{QEC}}$ \\
      \midrule
      \nkdninetysix{} & $8.0$  & $6$ & $18\,\mathrm{ms}$ & $100\,\mathrm{ns}$ & $0.8\%$                   & $0.20\,\mathrm{s}$ & $1.1\,\mu\mathrm{s}$ & $0.003\%$                  & $2.9\,\mathrm{s}$ & $320\,\mu\mathrm{s}$ & $8.3\,\mathrm{ms}$  \\
      \nkdoneninetwo{}  & $18.7$ & $8$ & $47\,\mathrm{ms}$ & $260\,\mathrm{ns}$ & $1.3\%$                   & $3.1\,\mathrm{s}$  & $17\,\mu\mathrm{s}$  & $0.07\%$                   & $10\,\mathrm{s}$  & $720\,\mu\mathrm{s}$ & $9.9\,\mathrm{ms}$ \\
      \bottomrule
    \end{tabular}%
  }
\end{table*}

\section{Heuristic finite-blocklength estimates}
\label{app:finite_blocklength}

In this appendix, we derive the finite-blocklength guide curves used in Sec.~\ref{sec:pareto} and Fig.~\ref{fig:pareto}. The construction is the direct quantum analog of the classical normal approximation to the finite blocklength channel coding rate~\cite{polyanskiy2010channel,tomamichel2016quantum}, applied heuristically with the quantum hashing bound used to approximate capacity.

The main result of Ref.~\cite{polyanskiy2010channel} states that the finite blocklength coding rate can be approximated by
\begin{equation}
  \label{eq:fb_classical}
  R^{*}(n,p,\varepsilon) = C(p) - \sqrt{\frac{V(p)}{n}}\,Q^{-1}(\varepsilon)
  + O\!\left(\frac{\log n}{n}\right),
\end{equation}
where $n$ is the block length, $p$ the channel error rate, $\varepsilon$ the logical error probability, $C$ is the channel capacity, $V$ is the channel dispersion measuring the stochastic variability of the channel relative to a deterministic channel, and $Q(x)=\tfrac{1}{\sqrt{2\pi}}\int_x^{\infty} e^{-t^2/2}\,dt = 1-\Phi(x)$ is the Gaussian tail. The $\sqrt{V/n}$ term is a finite-size ``backoff'' from capacity: it arises because the information density accumulated over $n$ uses is, by the central limit theorem, approximately Gaussian about its mean $C$ with variance $V/n$. Eq.~\eqref{eq:fb_classical} is known to be remarkably accurate already at blocklengths of a few hundred.

Rigorous generalizations of these bounds to quantum codes have been derived for the dephasing channel and erasure channel~\cite{tomamichel2016quantum}. Instead, we choose to focus on the case of the depolarizing channel, which serves as a better proxy for realistic noise. Given the lack of tight estimates in this setting, we instead apply the formulae heuristically, based on the probability distribution of the relevant error variable.
For the depolarizing channel, the per-qubit Pauli error distribution is
\begin{equation}
  \label{eq:pauli_dist}
  P_{\mathrm{dep}}(p) = (P_I,P_X,P_Y,P_Z)
  = \Bigl(1-p,\ \tfrac{p}{3},\ \tfrac{p}{3},\ \tfrac{p}{3}\Bigr).
\end{equation}
The quantum Hashing bound provides a lower bound (achievability bound) on the quantum capacity that can be realized with a random stabilizer code, with the rate given as
\begin{align}
  \label{eq:hashing}
  R_H(p) &= 1 - H\!\bigl(P_{\mathrm{dep}}(p)\bigr)\nonumber\\
  &=1+\underbrace{p\log_2 p + (1-p)\log_2(1-p)}_{-h(p)} - p\log_2 3.
\end{align}

To obtain the finite-size correction to the Hashing rate, we calculate the variance of the capacity for a given realization of the error.
For an i.i.d.\ error string $\mathbf{e}=(e_1,\dots,e_n)$, the normalized surprisal (characterizing the information gained upon the observation) is given by
\begin{equation}
  s_n(\mathbf{e}) \equiv -\frac{1}{n}\log_2 P(\mathbf{e})
  = -\frac{1}{n}\sum_{i=1}^{n}\log_2 P_{\mathrm{dep}}(e_i).
\end{equation}
This is an average of $n$ i.i.d.\ random variables with mean
$\mathbb{E}\!\left[-\log_2 P_{\mathrm{dep}}(e_i)\right]
= 1-R_H(p)$ and per-symbol variance
\begin{align}
  \label{eq:varentropy}
  V(p) &= \mathrm{Var}_{a\sim P_{\mathrm{dep}}}\!\bigl[-\log_2 P_{\mathrm{dep}}(a)\bigr]\nonumber\\
  &= \sum_{a} P_a\,(\log_2 P_a)^2 - H\!\bigl(P_{\mathrm{dep}}\bigr)^2,
\end{align}
the \emph{varentropy} of the Pauli distribution, which here plays the role of the channel dispersion. By the central limit theorem
$s_n \approx \mathcal{N}\!\bigl(1-R_H(p),\,V(p)/n\bigr)$. The hashing argument
identifies an error as correctable when it is sufficiently typical,
$s_n \le 1-R$ with $R=k/n$ the code rate, while atypically improbable errors
($s_n>1-R$) dominate the residual failure probability. Equating this failure
probability to the target $\varepsilon$,
\begin{equation}
  \varepsilon \approx \Pr\!\bigl[\,s_n > 1-R\,\bigr]
  = Q\!\left(\frac{R_H(p) - R}{\sqrt{V(p)/n}}\right),
\end{equation}
and solving for $R$ gives the maximal heuristic rate
\begin{equation}
  \label{eq:fb_quantum}
  \,R^{*}(n,p,\varepsilon) \approx R_H(p)
  - \sqrt{\frac{V(p)}{n}}\,\Phi^{-1}(1-\varepsilon)\,,
\end{equation}
which is Eq.~\eqref{eq:finite} of the main text. Evaluating
\eqref{eq:varentropy} with the distribution \eqref{eq:pauli_dist},
we have
\begin{equation}
  \label{eq:Vexplicit}
  V(p) = (1-p)\bigl(\log_2(1-p)\bigr)^2
  + p\left(\log_2\tfrac{p}{3}\right)^2
  - \bigl(h(p) + p\log_2 3\bigr)^2.
\end{equation}

Our simulations in Fig.~\ref{fig:pareto} use the preceding heuristic estimates for the depolarizing channel. In Fig.~\ref{fig:pareto}(a), we adjust the target $\varepsilon$ used in the estimate to be $k\times$ larger, as the logical error rate target for the data points is per logical operator.
In Fig.~\ref{fig:pareto}(b), we use the distance 12 instance for $P=96$, the distance 16 instance for $P=192$ and the distance 20 instance for $P=384$ described in Tab.~\ref{tab:apm-params}.

\section{Compiling Affine Permutations into AOD Moves}
\label{app:apm_compile}

The syndrome-extraction schedule applies affine permutations $x \mapsto ax + b$ on $\mathbb{Z}_P$ to the ancilla block between successive CNOT layers.
AOD transport implements only \emph{separable} (product) rearrangements, in which the new row position depends only on the old row and the new column position only on the old column, so a generic $P\times P$ permutation cannot be realized in a single step.
In this appendix, we describe conditions under which $\mathbb{Z}_P$ can be laid out on a 2D grid so that every affine permutation becomes separable, and further describe structures that reduce the reconfiguration cost along each axis.

\Cref{app:crt} shows that whenever $P = ml$ with $\gcd(m,l)=1$, the Chinese Remainder Theorem (CRT) provides a layout in which every APM is separable into independent row and column movements. \Cref{app:column_structure} then examines the column components of the APMs considered in this work, explaining how the commutation structure simplifies the movement, and identifies structures that reduce the movements for the \nkdninetysix{} and \nkdoneninetwo{} instances down to global cyclic shifts. \Cref{app:log_move} closes with a log-depth fallback for general APMs lacking either structure.

\subsection{CRT Representation and Separability}
\label{app:crt}

Suppose $P = ml$ with $\gcd(m,l) = 1$. The Chinese Remainder Theorem gives a \emph{ring} isomorphism
\[
  \mathbb{Z}_P \;\cong\; \mathbb{Z}_m \times \mathbb{Z}_l,
  \qquad x \;\longmapsto\; (x \bmod m,\, x \bmod l),
\]
under which an affine permutation $f(x) = ax + b$ acts independently on the two factors as $f_m(x_m) = (a \bmod m)\, x_m + (b \bmod m)$ on $\mathbb{Z}_m$ and $f_l(x_l) = (a \bmod l)\, x_l + (b \bmod l)$ on $\mathbb{Z}_l$. CRT thus reduces a single APM on $\mathbb{Z}_{ml}$ to two independent APMs, one on $\mathbb{Z}_m$ and one on $\mathbb{Z}_l$.

For AOD hardware, this decomposition is exactly the separability constraint: a crossed-AOD trap array has one AOD controlling row positions ($\mathbb{Z}_m$) and another controlling column positions ($\mathbb{Z}_l$), so each $P$-qubit block occupies an $m \times l$ atom sub-array with qubit $x$ at cell $(x \bmod m,\, x \bmod l)$, and the two APM factors become global column or row operations. If, additionally, the column-factor APMs all pairwise commute, they generate an abelian group acting on $\mathbb{Z}_l$, and after a one-time relabeling of columns each step becomes a rigid cyclic shift (\cref{app:column_structure}).

Intuitively, the layout can be thought of as consisting of a diagonal that wraps around the torus, since both coordinates increment together as $x$ runs through $\mathbb{Z}_P$. The layout is \emph{not} a row-major reshape of $\mathbb{Z}_P$, which can mix the two coordinates (\cref{app:log_move}).

All code instances in this work have $P = 3\cdot 2^k$ with $k \in \{5,6,7,8\}$, suggesting a natural layout on a $3 \times 2^k$ grid. For example, $g_0(x) = 61 x + 15$ on $\mathbb{Z}_{96}$ (the \nkdninetysix{} instance of \cref{tab:apm-params}) reduces to the identity on $\mathbb{Z}_3$ (row factor) and to $x_{32} \mapsto 29\, x_{32} + 15$ on $\mathbb{Z}_{32}$ (column factor), such that implementing $g_0$ requires no row motion at all.
For the $P=192$ instance, we instead choose a layout of dimensions $6\times 32$ instead of $3\times 64$, based on the group structure of the column factor.

\subsection{Designing APMs to Simplify Atom Movement}
\label{app:column_structure}

The CRT representation decouples row and column movements, but leaves open their individual costs. Row work on the $\mathbb{Z}_3$ factor permutes at most 3 rows and is cheap. The compilation question therefore reduces to the structure of the 12 \emph{column components} $\bar m_i \in \mathrm{Aff}(\mathbb{Z}_{P/3})$, where $\bar m_i(x) = (a_i \bmod P/3)\, x + (b_i \bmod P/3)$ is the $\mathbb{Z}_{P/3}$ reduction of the $i$th Kasai generator in \cref{eq:kasai_x,eq:kasai_z}. The physical AOD moves between rounds are transitions $\bar F_j \bar F_i^{-1}$ rather than the raw $\bar m_i$; these transitions lie in the group generated by the column components, so analyzing that group controls both.

We first make precise the design principle described in Sec.~\ref{sec:construction}: we constrain transitions to commute with a chosen reference APM \(A\) with large orbits.

\begin{lemma}
  Let \(A\) be an APM acting on $\mathbb{Z}_P$, and write its orbits as
  \[
    \mathcal O_i=\{A^t\gamma_i : t\in \mathbb Z\},
  \]
  where \(\gamma_i\) is a representative of the \(i\)-th orbit. Let \(M\) be another APM such that
  \[
    MA=AM.
  \]
  Then for each orbit \(\mathcal O_i\), there exist an orbit \(\mathcal O_j\) and an integer \(s_i\) such that
  \[
    M(A^t\gamma_i)=A^{t+s_i}\gamma_j
    \qquad\text{for all } t\in\mathbb Z.
  \]
  Equivalently, \(M\) maps each orbit of \(A\) onto another orbit of \(A\), and acts within that orbit as a uniform shift.
\end{lemma}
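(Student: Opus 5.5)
The argument is purely group-theoretic and uses only that $A$ and $M$ are commuting bijections of the finite set $\mathbb{Z}_P$; the affine structure plays no role beyond guaranteeing invertibility. The plan is to fix an orbit $\mathcal O_i$ with representative $\gamma_i$, look at where $M$ sends that single point, and then use the commutation relation to propagate this information along the whole orbit.

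Concretely, set $y=M\gamma_i$. Since the orbits of $A$ partition $\mathbb{Z}_P$, there is a unique orbit $\mathcal O_j$ containing $y$. We can then write $y=A^{s_i}\gamma_j$ for some integer $s_i$, which is determined modulo the length of $\mathcal O_j$. Next we show by induction that $MA^t=A^tM$ for all $t\ge 0$, starting from $MA=AM$. For negative $t$ we multiply $MA=AM$ on both sides by $A^{-1}$ to get $A^{-1}M=MA^{-1}$; this step uses that $A$ is a permutation, which holds because $\gcd(a,P)=1$. With this in hand, for every $t\in\mathbb Z$ we have
\[
M(A^t\gamma_i)=A^t M\gamma_i=A^tA^{s_i}\gamma_j=A^{t+s_i}\gamma_j ,
\]
which is exactly the claimed formula.

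For the ``equivalently'' clause, the displayed identity gives $M(\mathcal O_i)\subseteq\mathcal O_j$. Equality follows because, as $t$ ranges over $\mathbb Z$, the points $A^{t+s_i}\gamma_j$ cover all of $\mathcal O_j$. In particular $M$ is injective, so $|\mathcal O_i|\le|\mathcal O_j|$, and since its image is all of $\mathcal O_j$, the two orbits have equal length. The identity also says that, under the orbit-wise labelling $t\mapsto A^t\gamma$, the map $M$ acts as the uniform shift $t\mapsto t+s_i$. Finally, because $M$ is a bijection on $\mathbb{Z}_P$, the induced map $i\mapsto j$ on orbits is itself a permutation of the orbits.

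There is no real obstacle here. The only points needing care are the extension of commutation to negative powers of $A$, which needs $A$ invertible, and the bookkeeping that $s_i$ is defined only modulo the orbit length. The latter is harmless because $A^{|\mathcal O_j|}$ fixes every point of $\mathcal O_j$.
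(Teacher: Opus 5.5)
Your proposal is correct and follows the paper's proof essentially verbatim: write $M\gamma_i=A^{s_i}\gamma_j$, propagate along the orbit using $MA^t=A^tM$, and use that $M$ is a permutation to get a bijection $\mathcal O_i\to\mathcal O_j$. Your added care about negative powers of $A$ and about surjectivity onto $\mathcal O_j$ only makes explicit steps that the paper leaves implicit.
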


\begin{proof}
  Fix an orbit representative \(\gamma_i\). $M\gamma_i$ lies in some orbit of \(A\), say
  \[
    M\gamma_i=A^{s_i}\gamma_j
  \]
  for some \(j\) and some integer \(s_i\). Using the commutation relation \(MA=AM\), we then have for every \(t\in\mathbb Z\),
  \[
    M(A^t\gamma_i)=A^tM\gamma_i=A^tA^{s_i}\gamma_j=A^{t+s_i}\gamma_j.
  \]
  Thus every point in the orbit \(\mathcal O_i\) is mapped into the orbit \(\mathcal O_j\), with the same offset \(s_i\). Since \(M\) is a permutation, this map is bijective from \(\mathcal O_i\) onto \(\mathcal O_j\). This proves the claim.
\end{proof}

This guarantees that each transition decomposes, in $A$-orbit coordinates, as a cyclic shift within each orbit plus a permutation between orbits. However, for AOD compilation we would like more: for the shifts across orbits to coincide, so that the entire move is one global cyclic shift along the column axis, up to a small row permutation. The following structural statement identifies when this happens for the codes in Tab.~\ref{tab:apm-params} and Kasai's original $P=768$ code~\cite{kasai2026breaking}.

\begin{proposition}[Maximal abelian subgroup]
  \label{thm:max_abelian}
  Let $G_{\mathrm{col}} = \langle \bar m_1, \ldots, \bar m_{12}\rangle \leq \mathrm{Aff}(\mathbb{Z}_{P/3})$ be the subgroup generated by the 12 column components in our code instances. For each of the code instances we consider, the maximal abelian subgroup $B$ of $G_{\mathrm{col}}$ has order exactly $P/3$, verified by direct computation. $G_{\mathrm{col}}$ itself is abelian for all new codes identified in this paper, and non-abelian for Kasai's original construction.
\end{proposition}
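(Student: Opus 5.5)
The statement is a finite, instance-by-instance claim ("verified by direct computation"), so the plan is to reduce it to a small explicit group computation in $\mathrm{Aff}(\mathbb{Z}_{P/3})$ and then support it with a structural argument.

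\textbf{Step 1: build the column components.} For each code in Tab.~\ref{tab:apm-params}, and for Kasai's $P=768$ code, reduce the twelve maps $f_i(x)=a_ix+b_i$ and $g_i(x)=c_ix+d_i$ modulo $l=P/3$. This gives $\bar m_i(x)=\bar\alpha_i x+\bar\beta_i$. Represent each as a pair $(\bar\alpha_i,\bar\beta_i)$ with $\bar\alpha_i\in\mathbb{Z}_l^\times$, using the composition law
\[
(\alpha,\beta)\circ(\alpha',\beta')=(\alpha\alpha',\alpha\beta'+\beta).
\]
The CRT reduction of \cref{app:crt} makes this a well-defined homomorphism, so the column-factor analysis needs no further justification.

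\textbf{Step 2: the abelian claim via an explicit commutator.} A direct computation gives
\[
[(\alpha,\beta),(\alpha',\beta')]=\bigl(1,\,(\alpha-1)\beta'-(\alpha'-1)\beta\bigr).
\]
Two generators therefore commute iff $(\alpha-1)\beta'\equiv(\alpha'-1)\beta \pmod l$. Checking this for all 66 pairs shows $G_{\mathrm{col}}$ is abelian for each new instance. For Kasai's code, a single failing pair suffices to show non-abelianness; the natural candidates are the designated non-commuting pairs $(0,3)$ and $(1,2)$.

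\textbf{Step 3: identify the abelian group.} In the abelian case $B=G_{\mathrm{col}}$, and its order can be computed as follows. Let $K=G_{\mathrm{col}}\cap\{(1,\beta)\}$ be the translation part and $\bar\alpha(G_{\mathrm{col}})\le\mathbb{Z}_l^\times$ the multiplier image. Then
\[
|G_{\mathrm{col}}|=|K|\cdot|\bar\alpha(G_{\mathrm{col}})|.
\]
I expect the group to act regularly on $\mathbb{Z}_l$; this is the structure that turns each move into a single global cyclic shift, as in the Lemma, with one orbit and a uniform shift. The structural argument is this. An abelian group acting transitively on $\mathbb{Z}_l$ acts regularly, because point stabilizers are conjugate and hence all equal, and a subgroup fixing every point is trivial. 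Regularity then forces $|B|=l=P/3$, so the computation only needs to check transitivity, namely that the orbit of $0$ is all of $\mathbb{Z}_l$. This in turn follows once the generated translation subgroup together with the multipliers covers $\mathbb{Z}_l$.

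\textbf{Step 4: the non-abelian Kasai case.} Here I would compute maximal abelian subgroups by brute force, since $|\mathrm{Aff}(\mathbb{Z}_{256})|=256\cdot128$ is small. Enumerate $G_{\mathrm{col}}$ by BFS closure under the generators, then find the largest abelian subgroups, for example by taking centralizers of elements and checking which are abelian. Two useful inputs here: the translation subgroup is abelian of order at most $l$, and any abelian subgroup acting transitively has order exactly $l$. This reduces the task to confirming that no abelian subgroup exceeds order $l$ and that one attains it.

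\textbf{Main obstacle.} Intransitive abelian subgroups can in principle have order larger than $l$ in $\mathrm{Aff}(\mathbb{Z}_{2^k})$. An example is a large multiplier group that fixes $0$ together with some translations that commute with it. Ruling these out requires the explicit enumeration rather than a general argument. I would first compute the centralizer of each non-translation element of $G_{\mathrm{col}}$ and take the maximum order among the abelian ones.
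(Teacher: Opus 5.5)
Your Steps 1--3 match what the paper does. The paper gives no argument beyond direct computation, the offset criterion (your commutator formula), and the remark that $B$ acts regularly. For the new instances, your reduction is sound: check the 66 commutation conditions, then use ``abelian, faithful and transitive implies regular'' so that only the orbit of $0$ needs checking.

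The genuine error is your ``main obstacle'', and with it the plan of Step 4. Intransitive abelian subgroups cannot exceed order $l$: for every $l$, every abelian $A\le\mathrm{Aff}(\mathbb{Z}_l)$ has $|A|\le l$.
\begin{itemize}
\item The translation part $K=A\cap\{x\mapsto x+t\}$ is generated by some $x\mapsto x+d$ with $d\mid l$, so $|K|=l/d$.
\item Every $(\alpha,\beta)\in A$ commutes with $(1,d)$. By your own commutator formula this means $(\alpha-1)d\equiv 0\pmod l$, i.e.\ $\alpha\equiv 1\pmod{l/d}$.
\item At most $d$ residues mod $l$ satisfy this, so $|A|=|K|\cdot|\bar\alpha(A)|\le (l/d)\cdot d=l$.
\end{itemize}
Your proposed example illustrates the bound rather than violating it. For $l=2^k$ with $k\ge 2$, the full unit group fixing $0$ commutes only with the translations by $0$ and $2^{k-1}$, since $a=3$ forces $2t\equiv 0$. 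The resulting group has order exactly $2^k$. It is an intransitive abelian subgroup of order $l$, which shows that transitivity is sufficient for order $l$ but not necessary.

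Consequently the upper bound in the proposition comes for free. For Kasai's $P=768$ code, the only computation needed is to exhibit one abelian subgroup of order $256$ inside $G_{\mathrm{col}}$. For example, check that the mutually commuting components $\bar f_4,\bar g_0,\ldots,\bar g_5$ generate a group of that order, or a transitive one.

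This matters because the search you describe would not certify the bound on its own. A maximal abelian subgroup satisfies $A=C_G(A)=\bigcap_{a\in A}C_G(a)$, but it need not equal $C_G(g)$ for any single $g$. An abelian subgroup all of whose single-element centralizers are non-abelian is never inspected by ``compute $C_G(g)$ and keep the abelian ones''. So that procedure cannot rule out larger abelian subgroups without the general bound above.

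A smaller point concerns the witness pairs. The designated non-commutativity pairs $(0,3),(1,2)$ constrain the full APMs on $\mathbb{Z}_P$, and their non-commutation can live entirely in the $\mathbb{Z}_3$ factor, as the paper notes for $P=96$. They are therefore not reliable witnesses for non-abelianness of the column group; scan all pairs instead.
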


$B$ provides a convenient basis transformation that simplifies atom movement and ensures that all rows experience the same shift when a column component lies in this subgroup.
One can directly verify that $B$ acts regularly on $\mathbb{Z}_{P/3}$, so $b \mapsto b(0)$ is a bijection $B \leftrightarrow \mathbb{Z}_{P/3}$.
By the structure theorem for finite abelian groups, $B$ has cyclic generators $g_1, \ldots, g_r$ with $B \cong \mathbb{Z}_{n_1} \oplus \cdots \oplus \mathbb{Z}_{n_r}$, so every element of $B$ factors uniquely as $g_1^{e_1} \cdots g_r^{e_r}$. The bijection lifts this factorization to $\mathbb{Z}_{P/3}$: each column $c$ acquires a unique exponent vector $(e_1, \ldots, e_r)$ with $g_1^{e_1} \cdots g_r^{e_r}(0) = c$. Loading atoms into the AOD in this exponent order ensures that every $\bar m_i$ acts as a quasi-cyclic translation on $B$.
Note that this ordering is a software convention at startup; no atoms move as part of the relabeling.

More specifically, for the $P = 96$ instance, $B$ is cyclic and each round compiles to a single global cyclic shift; for the first $P = 192$ and $P=384$ instances, $B \cong \mathbb{Z}_{32} \times \mathbb{Z}_2$ and each round compiles to a pair of commuting shifts on the two factors, while for the latter $P = 192$ and $P=384$ instances, $B$ is cyclic. For Kasai's $P = 768$ instance, only some of the 12 column components lie inside a $P/3$-sized abelian subgroup and can be implemented with a global shift; the remaining ones may require a limited number of row-dependent cyclic shifts.
We now explain these cases in detail.

We verify commutation of two affine maps $m_i(x) = a_i x + b_i$ and $m_j(x) = a_j x + b_j$ in $\mathrm{Aff}(\mathbb{Z}_M)$ with the offset condition
\begin{equation}
  \label{eq:commute-condition} (a_i - 1)\, b_j \;\equiv\; (a_j - 1)\, b_i \pmod M.
\end{equation}

\paragraph{$P = 96$: cyclic.}
The subgroup generated by the 12 column components is the cyclic group $\mathbb{Z}_{32}$, where every column component can be written as some power $\gamma^{s_i}$ of a generator $\gamma$. Indeed, the non-commuting components of this code are fully concentrated in the $\mathbb{Z}_3$ portion.

Define the column relabeling $\sigma : \mathbb{Z}_{32} \to \mathbb{Z}_{32}$ by $\sigma(k) = \gamma^k(0)$; this is a bijection because $G_{\mathrm{col}}$ acts regularly on $\mathbb{Z}_{32}$. Under the relabeling $c = \sigma(k)$, every Kasai column component acts as a rigid cyclic shift:

\begin{lemma}[Linearization, $P=96$]
  \label{lem:linearize_p96}
  For each $i$, $\bar m_i(\sigma(k)) = \sigma(k + s_i)$, so in $k$-coordinates $\bar m_i$ sends $k$ to $k + s_i \pmod{32}$.
\end{lemma}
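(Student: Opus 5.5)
The proof is a short computation resting on two facts stated just before the lemma. First, $G_{\mathrm{col}}$ is the cyclic group $\langle\gamma\rangle\cong\mathbb{Z}_{32}$. Second, it acts regularly on $\mathbb{Z}_{32}$. The plan is to use regularity to identify $\mathbb{Z}_{32}$ with the group itself, and then read off that left multiplication by $\gamma^{s_i}$ is a translation.

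First I will justify the setup. Since $G_{\mathrm{col}}=\langle\gamma\rangle$ and each $\bar m_i\in G_{\mathrm{col}}$, there is an integer $s_i$, unique mod $32$, with $\bar m_i=\gamma^{s_i}$. Regularity means $|G_{\mathrm{col}}|=32$ and the stabilizer of $0$ is trivial. Hence $\gamma$ has order exactly $32$, and $k\mapsto\gamma^k(0)$ is injective on $\mathbb{Z}_{32}$. A map from a 32-element set to itself that is injective is surjective, so $\sigma$ is a well-defined bijection $\mathbb{Z}_{32}\to\mathbb{Z}_{32}$. Well-definedness mod $32$ uses $\gamma^{32}=\mathrm{id}$.

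Next comes the one-line computation:
\[
\bar m_i(\sigma(k))=\gamma^{s_i}\bigl(\gamma^{k}(0)\bigr)=\gamma^{k+s_i}(0)=\sigma(k+s_i),
\]
with the exponent read mod $32$ because $\gamma$ has order $32$. Thus $\sigma^{-1}\bar m_i\sigma$ is the translation $k\mapsto k+s_i$, which is the claim. The same argument applies to any transition $\bar F_j\bar F_i^{-1}=\gamma^{s_j-s_i}$, which therefore acts as the global shift by $s_j-s_i$.

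The only non-formal step is verifying the hypotheses for this instance: that the twelve reductions mod $32$ from Table~\ref{tab:apm-params} all lie in a single cyclic subgroup of order $32$ acting regularly. I would do this by direct computation. Take $\gamma$ to be an element of order $32$ among the column components; a candidate is $\bar g_1(x)=x+24$, but that translation has order $4$, so a better candidate is a map such as $29x+15$. Check that $\gamma^k(0)$ runs over all of $\mathbb{Z}_{32}$. Then, for each of the twelve components, find the exponent $s_i$ with $\gamma^{s_i}(0)=\bar m_i(0)$. Finally, confirm equality of the linear parts, $a_i\equiv\alpha^{s_i}\pmod{32}$, where $\alpha$ is the multiplier of $\gamma$; this is equivalent to pairwise commutation via Eq.~\eqref{eq:commute-condition} together with regularity. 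This finite check is the main obstacle, since it is where the actual code data enters. Everything else follows formally from Proposition~\ref{thm:max_abelian} and the regular-action bijection $b\mapsto b(0)$.
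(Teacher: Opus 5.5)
Your proposal is correct and is essentially the paper's proof: the same one-line computation $\bar m_i(\sigma(k))=\gamma^{s_i}(\gamma^k(0))=\gamma^{k+s_i}(0)=\sigma(k+s_i)$. The paper simply takes from the preceding paragraph the facts you spell out, namely that $G_{\mathrm{col}}\cong\mathbb{Z}_{32}$ acts regularly and that $\sigma$ is therefore a well-defined bijection. Your suggested generator $\gamma(x)=29x+15$ (the reduction $\bar g_0$) does work: it has order $32$, its orbit through $0$ is all of $\mathbb{Z}_{32}$, and the twelve components are $\gamma^{s}$ with $(s_{f_0},\dots,s_{f_5};s_{g_0},\dots,s_{g_5})=(3,7,2,0,8,3;1,24,14,6,23,30)$.
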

\begin{proof}
  $\bar m_i = \gamma^{s_i}$, so $\bar m_i(\sigma(k)) = \gamma^{s_i}(\gamma^k(0)) = \gamma^{s_i + k}(0) = \sigma(k + s_i)$, where the middle equality is the definition of group powers.
\end{proof}

The relabeling is a one-time choice of atom loading order at startup: physical column slot $k$ holds the qubit whose original $\mathbb{Z}_{96}$ column coordinate is $\sigma(k)$, and no atoms move as a result. Each subsequent round's column-AOD work is a single rigid cyclic shift by the integer $s_i$ for the current map. The code is \emph{quasi-cyclic} on its column factor: $\Phi : (r, k) \mapsto (r, k+1)$ commutes with every column and row action of the 12 generators, so it is a code automorphism of order 32, partitioning the 1152 data qubits into 36 orbits of length 32 and making the parity-check matrix in the relabeled basis a block matrix of $32\times 32$ circulants.

\paragraph{$P = 192$ instance 1: bivariate cyclic.}
Closing the 12 generators under composition yields $|G_{\mathrm{col}}| = 64$, but in contrast to $P = 96$ the group is \emph{not} cyclic: the maximum element order inside $G_{\mathrm{col}}$ is 32. Instead,
\[
  G_{\mathrm{col}} \;\cong\; \mathbb{Z}_{32} \times \mathbb{Z}_2,
\]
with explicit generators
\begin{align*}
  \gamma_2(c) &= 13\, c + 30 \pmod{64}, \qquad \text{order } 32,\\
  \gamma_1(c) &= -c + 43 \pmod{64}, \qquad \text{order } 2,
\end{align*}
Here $\gamma_2$ has order 32 with two orbits of length 32 on $\mathbb{Z}_{64}$ and plays the role of the reference APM at this block size. The element $\gamma_1$ is an affine reflection of $\mathbb{Z}_{64}$ about the half-integer point $c = 43/2$; it commutes with $\gamma_2$ and swaps the two $\gamma_2$-orbits, capturing in the bivariate structure what we would label as an inter-orbit permutation from the point of view of a single reference APM. Every Kasai column component factors uniquely as $\bar m_i = \gamma_1^{a_i} \gamma_2^{b_i}$ with $(a_i, b_i) \in \mathbb{Z}_2 \times \mathbb{Z}_{32}$.

Despite there being multiple orbits, the product group structure helps ensure that the same shift is applied to both, ensuring a global column cyclic shift.
Defining $\sigma(a, b) = \gamma_1^a \gamma_2^b(0) : \mathbb{Z}_2 \times \mathbb{Z}_{32} \to \mathbb{Z}_{64}$ (again a bijection by regularity), the same one-line calculation as in \cref{lem:linearize_p96} gives $\bar m_i(\sigma(a, b)) = \sigma(a + a_i, b + b_i)$, so every $\bar m_i$ acts as a rigid translation on $\mathbb{Z}_2 \times \mathbb{Z}_{32}$. The parity-check matrix in the relabeled basis decomposes into blocks that are circulant in two variables, with the second variable coming from a reflection rather than a second independent translation. Each round's column-AOD work is a pair of commuting rigid shifts, one on $\mathbb{Z}_{32}$ and one on $\mathbb{Z}_2$, with no per-column frequency reprogramming.

\paragraph{Remaining $P=192$ instance and $P=384$ instances}
These code instances are all constructed to have a maximal Abelian subgroup with order $P/3$. For the second $P=192$ instance, the group is $\mathbb{Z}_{64}$; for the first $P=384$ instance, the group is $\mathbb{Z}_2\times\mathbb{Z}_{64}$ while for the second $P=384$ instance, the group is $\mathbb{Z}_{128}$.

\paragraph{$P = 768$.}
This is Kasai's original $\llbracket 9216, 4612, \leq 48\rrbracket$ instance~\cite{kasai2026breaking}. The group $G_{\mathrm{col}}$ is non-abelian, and its maximal abelian subgroup has order $P/3 = 256$. Seven generators lie inside, $\{\bar f_4, \bar g_0, \bar g_1, \bar g_2, \bar g_3, \bar g_4, \bar g_5\}$, again the ones with $v_2(b_i) \geq 2$.

\paragraph{Summary and consequences.}
In each of these code instances we consider, the maximal abelian subgroup has order $P/3$; the fraction of Kasai generators lying inside it and the internal structure of the subgroup itself vary. \Cref{tab:family_summary} summarizes these patterns.

\begin{table*}[t]
  \centering
  \caption{Structure of the maximal abelian subgroup of the Kasai column group $G_{\mathrm{col}}$ across the four $P = 3\cdot 2^k$ variants considered in this work. ``\# inside'' counts the number of the 12 Kasai column components that lie in the maximal abelian subgroup. The maximal abelian subgroup itself always has order exactly $P/3$.}
  \label{tab:family_summary}
  \renewcommand{\arraystretch}{1.2}
  \begin{tabular}{c|c|c|l}
    \hline
    $P$ & \# inside & structure & outside generators \\
    \hline
    96  & $12/12$ & $\mathbb{Z}_{32}$ (cyclic) & none \\
    192 & $12/12$ & $\mathbb{Z}_{32} \times \mathbb{Z}_2$ or $\mathbb{Z}_{64}$& none \\
    384 & $12/12$  & $\mathbb{Z}_{64} \times \mathbb{Z}_2$ or $\mathbb{Z}_{128}$ & none \\
    768 & $7/12$  & abelian, rank $\geq 2$ & $\bar f_0, \bar f_1, \bar f_2, \bar f_3, \bar f_5$ \\
    \hline
  \end{tabular}
\end{table*}

The \nkdninetysix{} and \nkdoneninetwo{} instances have all 12 Kasai generators belonging to the maximal abelian subgroup. At $P = 96$, this reduces each round's column-AOD work to one global rigid cyclic shift; at $P = 192$, to a pair of commuting shifts. When we lay out the $P=192$ code as $6\times 32$, this naturally leads to global cyclic shifts in the column direction and swaps in the row direction. We note that these conclusions are conservative: raw-APM abelianness is sufficient for rigid-shift compilation but not strictly necessary, since the schedule-specific transitions $\bar F_j \bar F_i^{-1}$ could commute pairwise even when the raw-APM group does not. Whether this weaker condition holds depends on the specific syndrome-extraction schedule.

\subsection{Log-Depth Fallback for Generic Affine Permutations}
\label{app:log_move}

For affine permutations with no coprime factorization (\cref{app:crt}) or exploitable commuting structure (\cref{app:column_structure}), we describe a generic fallback that implements an arbitrary affine permutation in logarithmic depth on a 2D AOD grid.
In contrast to the 1D layout of each permutation matrix sub-block used in Ref.~\cite{xu2024constant}, our scheme is compatible with a 2D qubit layout, removing the need for an extra conversion layer between horizontal and vertical 1D layout and improving the aspect ratio of the computation region.

We implement each data or ancilla block with a 2D matrix of size $n=m\times l$, with the $i$th qubit having coordinates $(x,y)$, where $i=xm+y$.
Denoting the coordinates before and after the affine permutation with subscripts $1$ and $2$, the affine permutation maps $i_1\rightarrow i_2=Ai_1+B\left(\textrm{mod}\;N\right)$, which we can rewrite coordinate-wise as
\begin{align}
  x_2&=Ax_1+B\left(\textrm{mod}\;m\right),\label{eq:x}\\
  y_2&=\left[(Ax_1+B)//m+Ay_1\right]\left(\textrm{mod}\;l\right),\label{eq:y}
\end{align}
where $//$ denotes the integer quotient from division.

We implement the required atom movement in 3 stages (Fig.~\ref{fig:generic_apm}), requiring in total at most $2\log m+4\log l$ AOD grid moves.

First, we perform the rearrangement in the $x$ direction.
As seen in \cref{eq:x}, this rearrangement is independent of the $y$ coordinate, and can be performed in parallel within the same APM block.
Utilizing the divide and conquer 1D arbitrary permutation algorithm in Ref.~\cite{xu2024constant}, this can be implemented with a single AOD grid in at most $2\log m$ layers.

Next, we perform the rearrangement in the $y$ direction.
The affine component $Ay_1$ can be implemented in the same way as the $x$ direction using $2\log l$ layers.

Finally, the qubit ordering also introduces a column-dependent cyclic shift of the $y$ coordinate, as seen in the first term of \cref{eq:y}.
Crucially, the amount of the shift is only dependent on $x_1$ and not $y_1$, indicating that it is cyclic and has fewer degrees of freedom than a per-column arbitrary permutation.
We implement this in $\log l$ batched cyclic shifts ($2\log l$ individual grid moves), by decomposing each cyclic shift into a binary representation, and performing shifts on all columns that have a given power of two in their decomposition together.
For example, cyclic shifts of $[1,7,9,5,4,2]$ across 6 columns decompose into shifts by 1, 2, 4, and 8 applied to different column subsets, four steps in total.

While we have focused on the case of general APMs, the same algorithm also works for cyclic permutations, which underlie the core structure of lifted product (LP) codes.
The ring elements used to construct LP codes are special instances of APMs with $A=1$.
In this case, the shifts from $A$ become trivial, and instead of performing a more complex permutation, we simply perform a cyclic shift.
In total, we need to perform a global cyclic shift in the $x$ direction (2 AOD moves), a global cyclic shift in the $y$ direction (2 AOD moves), and a cyclic shift of a subset of the columns in the $y$ direction due to the carry bit (2 AOD moves).
Therefore, a single cyclic permutation in 2D requires 6 AOD moves.
When $l$ and $m$ are coprime, the CRT representation of \cref{app:crt} applies and removes the column-dependent moves.

\begin{figure*}[t]
  \centering
  \includegraphics[width=2\columnwidth]{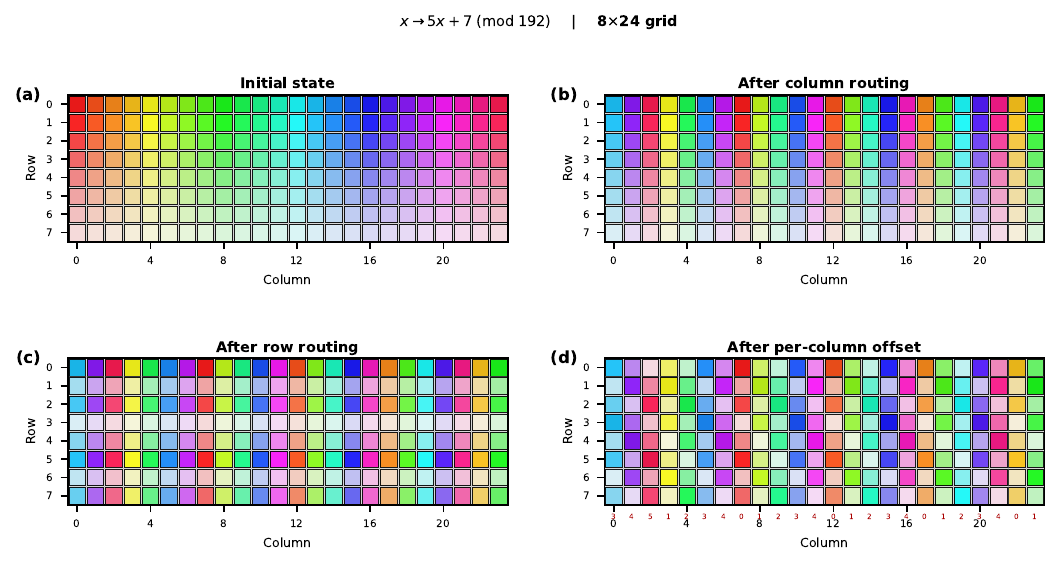}
  \caption{Implementing a generic APM in log depth. (a) Initial state with qubits color-coded. (b) Global routing of different columns using a log-depth 1D permutation in the $x$ direction. (c) Global routing of different rows using a log-depth 1D permutation in the $y$ direction. (d) Per-column offset, implemented in log-depth by decomposing the shifts into binary representation.}
  \label{fig:generic_apm}
\end{figure*}

\subsection{Atom Layout and Time Estimates}
\label{app:precise_layout}

\begin{table*}[t]
  \centering
  \begin{tabular}{l c c}
    \hline
    & \nkdninetysix{} & \nkdoneninetwo{} \\
    \hline
    Space (width $\times$ height) & $374\,\mu\mathrm{m} \times 420\,\mu\mathrm{m}$ & $374\,\mu\mathrm{m} \times 852\,\mu\mathrm{m}$ \\
    SE round (2 crossed-AODs) & $13{,}267\,\mu\mathrm{s}$ & $16{,}910\,\mu\mathrm{s}$ \\
    SE round (4 crossed-AODs) & $8{,}317\,\mu\mathrm{s}$ & $9{,}935\,\mu\mathrm{s}$ \\
    \hline
  \end{tabular}
  \caption{Space requirements (excluding measurement zones) and estimated SE round time for the \nkdninetysix{} and \nkdoneninetwo{} codes.}
  \label{tab:space-time-summary}
\end{table*}

We model the execution of the \nkdninetysix{} and \nkdoneninetwo{} codes on a neutral-atom quantum computer and estimate the time per syndrome extraction (SE) round. We make the following assumptions:
\begin{enumerate}
  \item Data atoms are spaced $12\,\mu\mathrm{m}$ apart both vertically and horizontally; ancilla atoms are placed $2\,\mu\mathrm{m}$ from the corresponding data atom for each gate~\cite{bluvstein2022quantum,zhou2025resource}.
  \item Atoms are moved smoothly with constant acceleration and deceleration of $5500\,\mathrm{m/s^2}$~\cite{bluvstein2022quantum}.
  \item Atom moves are modeled as straight lines. The duration of straight-line moves is a good approximation to the slightly curved paths needed to avoid atom collisions.
  \item A parallel round of CZ gates takes $1\,\mu\mathrm{s}$, which is negligible compared to move times of hundreds of microseconds.
  \item A parallel round of Hadamard gates is likewise negligible.
  \item Measurement of the $Z$ and $X$ checks is staggered as described in the main text and takes as little as $500\,\mu\mathrm{s}$~\cite{bluvstein2022quantum}.
  \item Rather than waiting for measurement to complete before resuming gates, we bring in fresh qubits to replace the ancilla qubits and continue gates in parallel with measurement. This makes the measurement time irrelevant in our regime, as move speed is the bottleneck.
  \item We operate either two or four crossed-AODs in parallel to independently control two or four sets of dynamic traps. Two AODs allow shifts or swaps to be trivially parallelized, reducing the total time by nearly half. Four AODs additionally enable data-block moves to occur in parallel with the ancilla-qubit permutations, reducing the total time by roughly another factor of two. The net effect is that any two permutations shown in a single frame of Figs.~\ref{fig:P96-full-move-sequence} and~\ref{fig:P192-full-move-sequence} are performed simultaneously.
  \item With four AODs, up to two of the permutations indicated by arrows in the figures can be executed concurrently. We choose the ordering freely to minimize total time. Note that a pair of row swaps (one $Z$ and one $X$) counts as a single operation for the purpose of parallelization, because two AODs can handle any number of non-conflicting swaps simultaneously.
\end{enumerate}

\paragraph{Results.}
We find the APM ordering $F_0, F_5, F_2, F_4, F_1, F_3, G_3, G_1, G_4, G_2, G_5, G_0$ to yield favorable total time for both codes.
Any reordering of the APMs is valid provided all $F$ APMs precede all $G$ APMs.
We leave the co-optimization of movement time and hook error mitigation to future work.
We further require the first $F$ and the last $G$ to share the same index, and likewise for the last $F$ and first $G$.
Table~\ref{tab:space-time-summary} summarizes the space and time cost of this ordering; the detailed atom positions and rearrangement sequences are shown in Figs.~\ref{fig:P96-full-move-sequence} and~\ref{fig:P192-full-move-sequence}.

We also note that this code construction is particularly well-suited for using Bell pairs to parallelize syndrome extraction, since such a choice will match the number of data and ancilla qubits, allowing maximal parallelism where each data and ancilla qubit is involved in an entangling gate in each layer. To perform $X$ and $Z$ syndrome extraction in parallel, we can use the left-right circuit approach~\cite{xu2024constant,strikis2026high} or the macroscheduler approach~\cite{menon2025magic}. Bell checks can be readily incorporated into these schedules, with no or only a small increase in circuit depth.

\subsection{Movement Compilation for Lifted Product Codes}
\label{app:lifted_product}

The compilation strategy of \cref{app:apm_compile} is not specific to the Kasai construction; it applies directly to lifted product (LP) codes~\cite{panteleev2019degenerate} built over a cyclic group ring, a family that includes the bivariate bicycle (BB) codes~\cite{bravyi2024high,yoder2025tour}. Such a code is specified by base matrices over the group algebra $R=\mathbb{F}_2[\mathbb{Z}_\ell]$ of a cyclic group, in which each scalar entry is a polynomial whose monomials $x^s$ are the cyclic shifts of $\mathbb{Z}_\ell$. The resulting CSS code carries a free $\mathbb{Z}_\ell$ action, so each data and ancilla qubit is labeled by a (small) base index together with a lift index in $\mathbb{Z}_\ell$. Every monomial appearing in a check therefore acts on the lift index as a single uniform cyclic shift $t\mapsto t+s$, independent of the base index: in the language of \cref{app:apm_compile}, the ring elements of an LP code are exactly the affine permutations with $a=1$, corresponding to pure translations.

A cyclic shift of a single lift block is by itself inexpensive, but realizing it as one crossed-AOD move requires that the \emph{same} shift act simultaneously on every block that shares the moving axis; if different blocks required different shifts, the AOD would have to address them one at a time, reintroducing the per-row serialization noted in \cref{app:code_construction}. The product structure of the (lifted) hypergraph product ensures that this can be implemented in parallel. The boundary maps have the Kronecker form $A\otimes I$ and $I\otimes B$, where $A$ and $B$ are the base matrices: the factor carrying the ring elements acts on one base index, while the identity acts on the other. Because of the identity factor, the same ring element---and hence the same lift shift---is replicated across all blocks of the untouched factor, so a single global move implements that part of the layer across all of them in parallel, with the two AOD axes playing the roles of the two tensor factors. This is the mechanism by which the block structure already repeats in a product fashion in Ref.~\cite{xu2024constant}; for a cyclic lift it additionally makes the \emph{shifts} repeat, so they parallelize as well.

The same applies to the transitions between successive CNOT layers, which are the moves actually executed. A transition is a product of consecutive ring monomials, and because these all lie in the abelian group $\mathbb{Z}_\ell$, the product is again a single monomial---one uniform cyclic shift. Every layer, and every inter-layer transition, is therefore a single rigid shift applied uniformly to an entire block; for lifted product codes the move is usually just this uniform block shift.

Finally, the CRT layout of \cref{app:crt} controls the cost along each axis exactly as before. When the cyclic factors have coprime orders---or when a single composite lift $\mathbb{Z}_{ml}$ with $\gcd(m,l)=1$ is folded onto an $m\times l$ grid---the translation separates into independent row and column AOD shifts with no carry correction~\cite{wang2026coprime}.

\begin{figure*}[t]
  \centering
  \includegraphics[width=2\columnwidth]{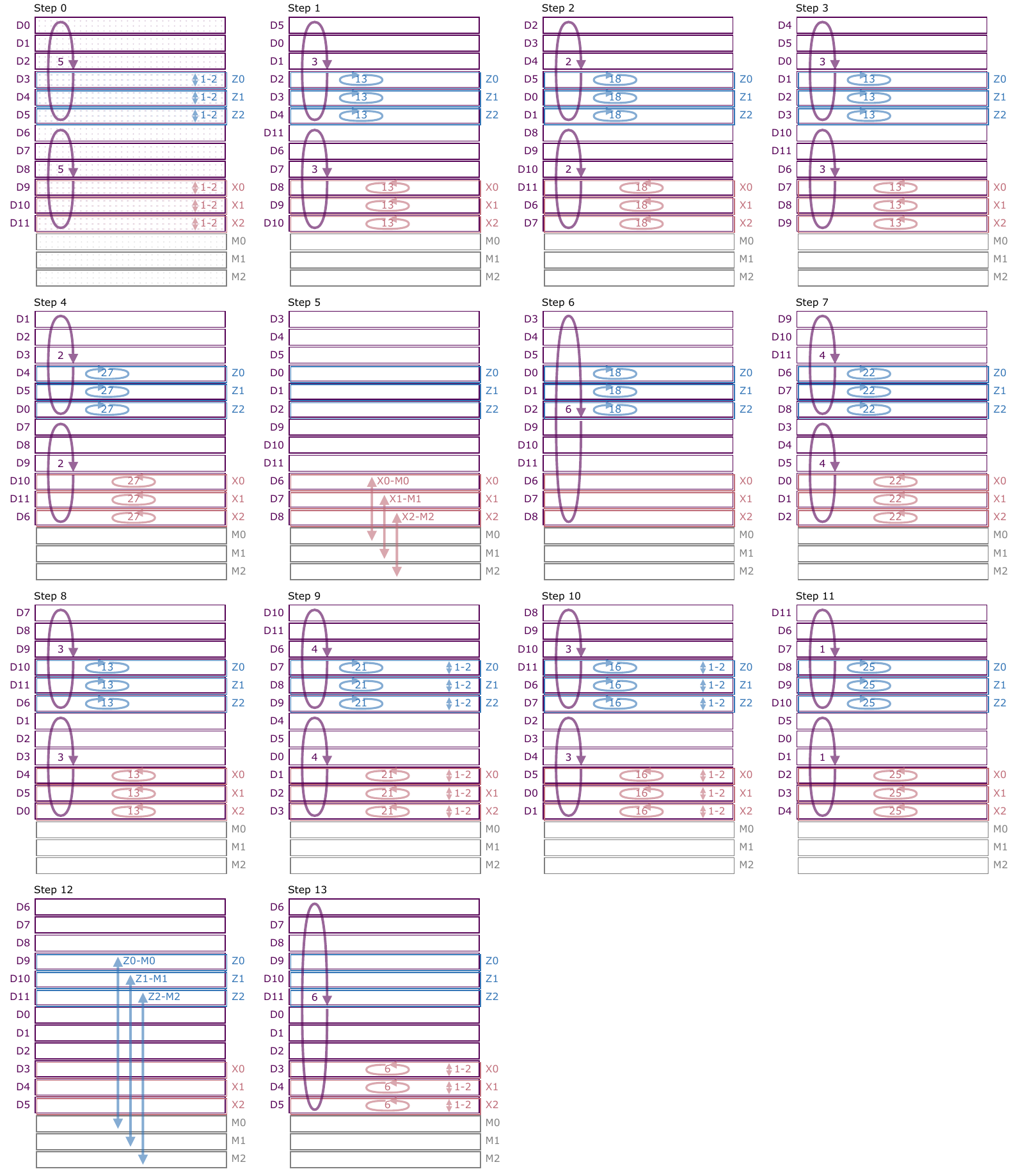}
  \caption{Atom positions and rearrangement sequence for the \nkdninetysix{} code ($P=96$). Positions are drawn to scale: data qubits are spaced $12\,\mu\mathrm{m}$ apart both vertically and horizontally, and each ancilla qubit sits $2\,\mu\mathrm{m}$ to the right of its associated data qubit. Arrows indicate atom rearrangements between or within blocks. Vertical circular arrows denote cyclic permutations of data blocks in which the bottom $n$ blocks are moved to the top. Horizontal circular arrows denote cyclic permutations of check qubits in which the rightmost $n$ columns are moved to the left. Double-ended arrows denote swaps between specific atom rows or blocks as labeled. CZ gates (not shown) are applied between each rearrangement step as well as before Step~0, but not between Steps~5 and~6 or between Steps~12 and~13.}
  \label{fig:P96-full-move-sequence}
\end{figure*}

\begin{figure*}[t]
  \centering
  \includegraphics[width=2\columnwidth]{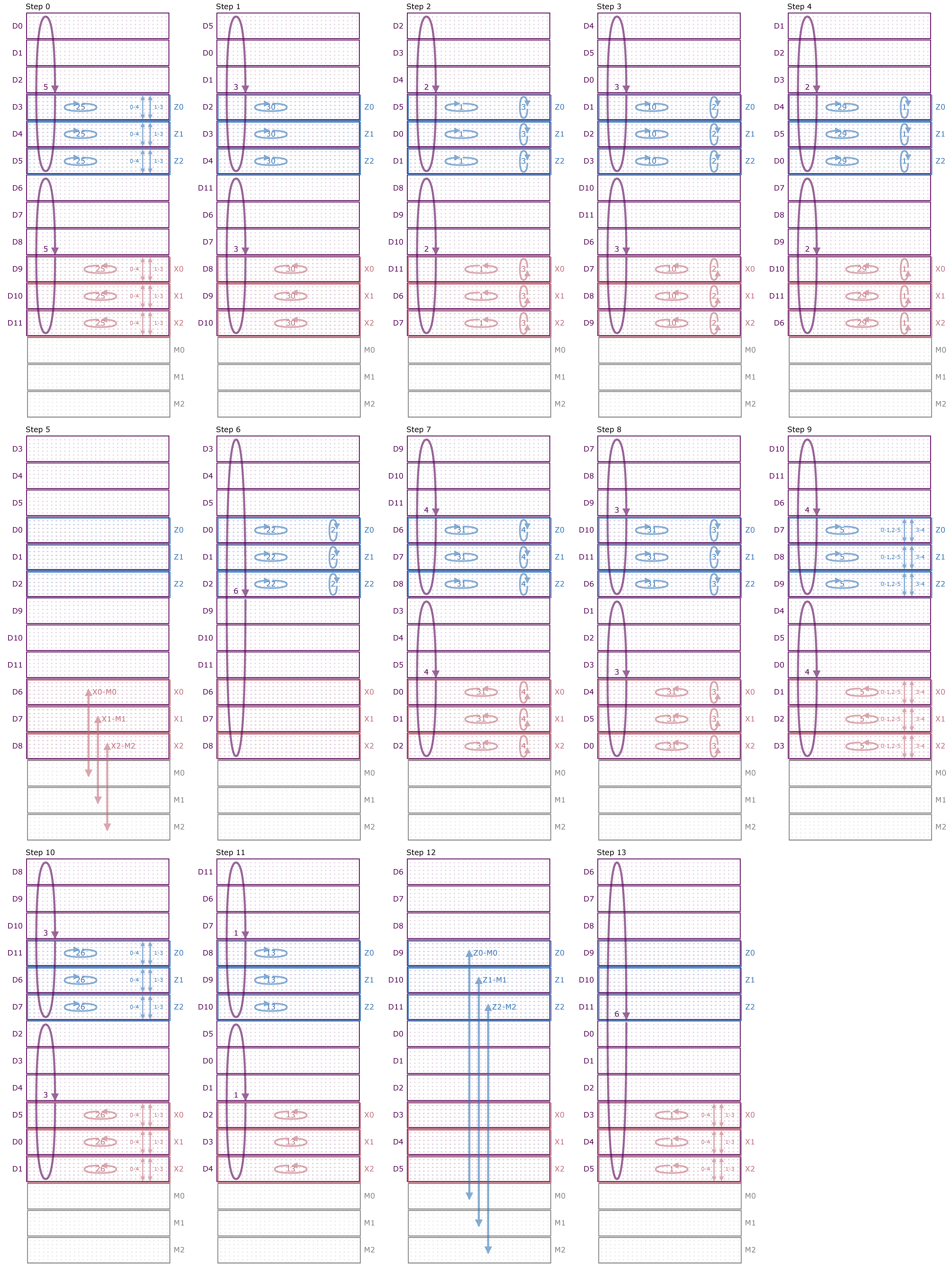}
  \caption{Atom positions and rearrangement sequence for the \nkdoneninetwo{} code ($P=192$). The layout and notation follow those of Fig.~\ref{fig:P96-full-move-sequence}. }
  \label{fig:P192-full-move-sequence}
\end{figure*}

%% file: main.bib
@article{tomamichel2016quantum,archivePrefix = {arXiv},arxivId = {1504.04617},author = {Tomamichel, Marco and Berta, Mario and Renes, Joseph M.},doi = {10.1038/ncomms11419},issn = {20411723},journal = {Nature Communications},title = {{Quantum coding with finite resources}},volume = {7},year = {2016}}

@article{polyanskiy2010channel,author = {Polyanskiy, Yury and Poor, H. Vincent and Verd{\'{u}}, Sergio},doi = {10.1109/TIT.2010.2043769},issn = {00189448},journal = {IEEE Transactions on Information Theory},number = {5},pages = {2307--2359},title = {{Channel coding rate in the finite blocklength regime}},volume = {56},year = {2010}}

@article{kuo2025generalized,archivePrefix = {arXiv},arxivId = {2310.12682v2},author = {Kuo, Kao Yueh and Lai, Ching Yi},doi = {10.1109/TIT.2025.3529773},issn = {15579654},journal = {IEEE Transactions on Information Theory},number = {3},pages = {1824--1840},publisher = {Institute of Electrical and Electronics Engineers Inc.},title = {{Generalized Quantum Data-Syndrome Codes and Belief Propagation Decoding for Phenomenological Noise}},url = {http://arxiv.org/abs/2310.12682 http://dx.doi.org/10.1109/TIT.2025.3529773},volume = {71},year = {2025}}

@article{skoric2023parallel,archivePrefix = {arXiv},arxivId = {2209.08552},author = {Skoric, Luka and Browne, Dan E. and Barnes, Kenton M. and Gillespie, Neil I. and Campbell, Earl T.},doi = {10.1038/s41467-023-42482-1},issn = {2041-1723},journal = {Nature Communications},number = {1},pages = {7040},title = {{Parallel window decoding enables scalable fault tolerant quantum computation}},url = {https://arxiv.org/abs/2209.08552v1 http://arxiv.org/abs/2209.08552 https://www.nature.com/articles/s41467-023-42482-1},volume = {14},year = {2023}}

@article{daguerre2024code,archivePrefix = {arXiv},arxivId = {2410.07327},author = {Daguerre, Lucas and Kim, Isaac H.},title = {{Code switching revisited: low-overhead magic state preparation using color codes}},url = {https://arxiv.org/abs/2410.07327v2},year = {2024},journal = {arXiv preprint arXiv:2410.07327}}

@article{muller2025improved,archivePrefix = {arXiv},arxivId = {2506.01779},author = {M{\"{u}}ller, Tristan and Alexander, Thomas and Beverland, Michael E. and B{\"{u}}hler, Markus and Johnson, Blake R. and Maurer, Thilo and Vandeth, Drew},title = {{Improved belief propagation is sufficient for real-time decoding of quantum memory}},url = {https://arxiv.org/pdf/2506.01779},year = {2025},journal = {arXiv preprint arXiv:2506.01779}}

@article{tillich2014quantum,author = {Tillich, Jean Pierre and Zemor, Gilles},doi = {10.1109/TIT.2013.2292061},journal = {IEEE Transactions on Information Theory},number = {2},pages = {1193--1202},publisher = {Institute of Electrical and Electronics Engineers Inc.},title = {{Quantum LDPC codes with positive rate and minimum distance proportional to the square root of the blocklength}},volume = {60},year = {2014}}

@article{fowler2012surface,author = {Fowler, Austin G. and Mariantoni, Matteo and Martinis, John M. and Cleland, Andrew N.},doi = {10.1103/PhysRevA.86.032324},issn = {1050-2947},journal = {Physical Review A},number = {3},pages = {032324},publisher = {American Physical Society},title = {{Surface codes: Towards practical large-scale quantum computation}},url = {https://link.aps.org/doi/10.1103/PhysRevA.86.032324 http://link.aps.org/doi/10.1103/PhysRevA.86.032324},volume = {86},year = {2012}}

@article{pryadko2022qdistrnd,author = {Pryadko, Leonid P and Shabashov, Vadim A and Kozin, Valerii K},doi = {10.21105/joss.04120},journal = {Journal of Open Source Software},number = {71},pages = {4120},publisher = {The Open Journal},title = {{QDistRnd: A GAP package for computing the distance of quantum error-correcting codes}},url = {https://doi.org/10.21105/joss.04120},volume = {7},year = {2022}}

@article{bravyi2013simulation,archivePrefix = {arXiv},arxivId = {1308.6270},author = {Bravyi, Sergey and Vargo, Alexander},doi = {10.1103/PHYSREVA.88.062308/FIGURES/11/MEDIUM},issn = {10502947},journal = {Physical Review A - Atomic, Molecular, and Optical Physics},number = {6},pages = {062308},publisher = {American Physical Society},title = {{Simulation of rare events in quantum error correction}},url = {https://journals.aps.org/pra/abstract/10.1103/PhysRevA.88.062308},volume = {88},year = {2013}}

@article{bonilla2025neural,archivePrefix = {arXiv},arxivId = {2509.11370},author = {{Bonilla Ataides}, J. Pablo and Gu, Andi and Yelin, Susanne F and Lukin, Mikhail D},title = {{Neural Decoders for Universal Quantum Algorithms}},url = {https://arxiv.org/pdf/2509.11370},year = {2025},journal = {arXiv preprint arXiv:2509.11370}}

@misc{gurobi,author = {{Gurobi Optimization, LLC}},title = {{Gurobi Optimizer Reference Manual}},url = {https://www.gurobi.com},year = {2024}}

@article{dennis2002topological,archivePrefix = {arXiv},arxivId = {quant-ph/0110143},author = {Dennis, Eric and Kitaev, Alexei and Landahl, Andrew and Preskill, John},doi = {10.1063/1.1499754},issn = {00222488},journal = {Journal of Mathematical Physics},number = {9},pages = {4452--4505},primaryClass = {quant-ph},title = {{Topological quantum memory}},volume = {43},year = {2002}}

@article{beni2025tesseract,archivePrefix = {arXiv},arxivId = {2503.10988},author = {Beni, Laleh Aghababaie and Higgott, Oscar and Shutty, Noah},title = {{Tesseract: A Search-Based Decoder for Quantum Error Correction}},url = {https://arxiv.org/pdf/2503.10988},year = {2025},journal = {arXiv preprint arXiv:2503.10988}}

@article{sahay2025fold,archivePrefix = {arXiv},arxivId = {2509.05212},author = {Sahay, Kaavya and Tsai, Pei-Kai and Chang, Kathleen and Su, Qile and Smith, Thomas B. and Singh, Shraddha and Puri, Shruti},title = {{Fold-transversal surface code cultivation}},url = {https://arxiv.org/pdf/2509.05212},year = {2025},journal = {arXiv preprint arXiv:2509.05212}}

@article{bausch2024learning,archivePrefix = {arXiv},arxivId = {2310.05900},author = {Bausch, Johannes and Senior, Andrew W and Heras, Francisco J H and Edlich, Thomas and Davies, Alex and Newman, Michael and Jones, Cody and Satzinger, Kevin and Niu, Murphy Yuezhen and Blackwell, Sam and Holland, George and Kafri, Dvir and Atalaya, Juan and Gidney, Craig and Hassabis, Demis and Boixo, Sergio and Neven, Hartmut and Kohli, Pushmeet},doi = {10.1038/s41586-024-08148-8},issn = {0028-0836},journal = {Nature},number = {8040},pages = {834--840},title = {{Learning high-accuracy error decoding for quantum processors}},url = {https://arxiv.org/abs/2310.05900v1 http://arxiv.org/abs/2310.05900%0Ahttp://dx.doi.org/10.1038/s41586-024-08148-8 https://www.nature.com/articles/s41586-024-08148-8},volume = {635},year = {2024}}

@article{breuckmann2017hyperbolic,archivePrefix = {arXiv},arxivId = {1703.00590v2},author = {Breuckmann, Nikolas P. and Vuillot, Christophe and Campbell, Earl and Krishna, Anirudh and Terhal, Barbara M.},doi = {10.1088/2058-9565/aa7d3b},journal = {Quantum Science and Technology},number = {3},publisher = {Institute of Physics Publishing},title = {{Hyperbolic and Semi-Hyperbolic Surface Codes for Quantum Storage}},url = {http://arxiv.org/abs/1703.00590 http://dx.doi.org/10.1088/2058-9565/aa7d3b},volume = {2},year = {2017}}

@article{xu2024constant,archivePrefix = {arXiv},arxivId = {2308.08648},author = {Xu, Qian and {Bonilla Ataides}, J. Pablo and Pattison, Christopher A and Raveendran, Nithin and Bluvstein, Dolev and Wurtz, Jonathan and Vasi{\'{c}}, Bane and Lukin, Mikhail D and Jiang, Liang and Zhou, Hengyun},doi = {10.1038/s41567-024-02479-z},issn = {17452481},journal = {Nature Physics},title = {{Constant-overhead fault-tolerant quantum computation with reconfigurable atom arrays}},url = {https://arxiv.org/abs/2308.08648v1 http://arxiv.org/abs/2308.08648 https://www.nature.com/articles/s41567-024-02479-z},year = {2024}}

@article{baspin2025fast,archivePrefix = {arXiv},arxivId = {2510.04521},author = {Baspin, Nou{\'{e}}dyn and Berent, Lucas and Cohen, Lawrence Z.},title = {{Fast surgery for quantum LDPC codes}},url = {https://arxiv.org/pdf/2510.04521},year = {2025},journal = {arXiv preprint arXiv:2510.04521}}

@article{blue2025machine,archivePrefix = {arXiv},arxivId = {2504.13043},author = {Blue, John and Avlani, Harshil and He, Zhiyang and Ziyin, Liu and Chuang, Isaac L.},title = {{Machine Learning Decoding of Circuit-Level Noise for Bivariate Bicycle Codes}},url = {https://arxiv.org/pdf/2504.13043},year = {2025},journal = {arXiv preprint arXiv:2504.13043}}

@article{yoder2025tour,archivePrefix = {arXiv},arxivId = {2506.03094},author = {Yoder, Theodore J. and Schoute, Eddie and Rall, Patrick and Pritchett, Emily and Gambetta, Jay M. and Cross, Andrew W. and Carroll, Malcolm and Beverland, Michael E.},title = {{Tour de gross: A modular quantum computer based on bivariate bicycle codes}},url = {https://arxiv.org/pdf/2506.03094},year = {2025},journal = {arXiv preprint arXiv:2506.03094}}

@book{richardson2008modern,author = {Richardson, Tom and Urbanke, Ruediger},publisher = {Cambridge University Press},title = {{Modern coding theory}},year = {2008}}

@article{landahl2011fault,archivePrefix = {arXiv},arxivId = {1108.5738},author = {Landahl, Andrew J. and Anderson, Jonas T. and Rice, Patrick R.},title = {{Fault-tolerant quantum computing with color codes}},url = {https://arxiv.org/abs/1108.5738v1 http://arxiv.org/abs/1108.5738},year = {2011},journal = {arXiv preprint arXiv:1108.5738}}

@article{kuo2022exploiting,archivePrefix = {arXiv},arxivId = {2104.13659},author = {Kuo, Kao Yueh and Lai, Ching Yi},doi = {10.1038/s41534-022-00623-2},issn = {2056-6387},journal = {npj Quantum Information 2022 8:1},number = {1},pages = {1--9},publisher = {Nature Publishing Group},title = {{Exploiting degeneracy in belief propagation decoding of quantum codes}},url = {https://www.nature.com/articles/s41534-022-00623-2},volume = {8},year = {2022}}

@article{lin2024quantum,archivePrefix = {arXiv},arxivId = {2306.16400},author = {Lin, Hsiang Ku and Pryadko, Leonid P.},doi = {10.1103/PhysRevA.109.022407},issn = {24699934},journal = {Physical Review A},number = {2},pages = {022407},publisher = {American Physical Society},title = {{Quantum two-block group algebra codes}},url = {https://journals.aps.org/pra/abstract/10.1103/PhysRevA.109.022407},volume = {109},year = {2024}}

@article{cain2026shor,archivePrefix = {arXiv},arxivId = {2603.28627},author = {Cain, Madelyn and Xu, Qian and King, Robbie and Picard, Lewis R. B. and Levine, Harry and Endres, Manuel and Preskill, John and Huang, Hsin-Yuan and Bluvstein, Dolev},title = {{Shor's algorithm is possible with as few as 10,000 reconfigurable atomic qubits}},url = {https://arxiv.org/pdf/2603.28627},year = {2026},journal = {arXiv preprint arXiv:2603.28627}}

@article{zhang2024time,archivePrefix = {arXiv},arxivId = {2408.01339v3},author = {Zhang, Guo and Li, Ying},doi = {10.1103/PhysRevLett.134.070602},journal = {Physical Review Letters},number = {7},publisher = {American Physical Society},title = {{Time-Efficient Logical Operations on Quantum Low-Density Parity Check Codes}},url = {http://arxiv.org/abs/2408.01339 http://dx.doi.org/10.1103/PhysRevLett.134.070602},volume = {134},year = {2024}}

@article{cohen2022low,archivePrefix = {arXiv},arxivId = {2110.10794},author = {Cohen, Lawrence Z. and Kim, Isaac H. and Bartlett, Stephen D. and Brown, Benjamin J.},doi = {10.1126/sciadv.abn1717},issn = {2375-2548},journal = {Science Advances},number = {20},title = {{Low-overhead fault-tolerant quantum computing using long-range connectivity}},url = {https://arxiv.org/abs/2110.10794v1 http://arxiv.org/abs/2110.10794 http://dx.doi.org/10.1126/sciadv.abn1717 https://www.science.org/doi/10.1126/sciadv.abn1717},volume = {8},year = {2022}}

@article{bravyi2024high,archivePrefix = {arXiv},arxivId = {2308.07915},author = {Bravyi, Sergey and Cross, Andrew W. and Gambetta, Jay M. and Maslov, Dmitri and Rall, Patrick and Yoder, Theodore J.},doi = {10.1038/s41586-024-07107-7},issn = {0028-0836},journal = {Nature},number = {8005},pages = {778--782},publisher = {Nature Publishing Group},title = {{High-threshold and low-overhead fault-tolerant quantum memory}},url = {https://www.nature.com/articles/s41586-024-07107-7 https://arxiv.org/abs/2308.07915v1},volume = {627},year = {2024}}

@article{tremblay2022constant,archivePrefix = {arXiv},arxivId = {2109.14609},author = {Tremblay, Maxime A. and Delfosse, Nicolas and Beverland, Michael E.},doi = {10.1103/PhysRevLett.129.050504},issn = {0031-9007},journal = {Physical Review Letters},number = {5},pages = {050504},title = {{Constant-Overhead Quantum Error Correction with Thin Planar Connectivity}},url = {https://arxiv.org/abs/2109.14609v1 http://arxiv.org/abs/2109.14609 http://dx.doi.org/10.1103/PhysRevLett.129.050504 https://link.aps.org/doi/10.1103/PhysRevLett.129.050504},volume = {129},year = {2022}}

@article{cross2024improved,archivePrefix = {arXiv},arxivId = {2407.18393},author = {Cross, Andrew and He, Zhiyang and Rall, Patrick and Yoder, Theodore},journal = {arXiv:2407.18393},title = {{Improved QLDPC Surgery: Logical Measurements and Bridging Codes}},url = {https://arxiv.org/pdf/2407.18393 http://arxiv.org/abs/2407.18393},year = {2024}}

@article{komoto2024quantum,archivePrefix = {arXiv},arxivId = {2412.21171},author = {Komoto, Daiki and Kasai, Kenta},title = {{Quantum Error Correction near the Coding Theoretical Bound}},url = {https://arxiv.org/abs/2412.21171v1},year = {2024},journal = {arXiv preprint arXiv:2412.21171}}

@article{ide2025fault,archivePrefix = {arXiv},arxivId = {2410.02753},author = {Ide, Benjamin and Gowda, Manoj G. and Nadkarni, Priya J. and Dauphinais, Guillaume},doi = {10.1103/PhysRevX.15.021088},issn = {2160-3308},journal = {Physical Review X},number = {2},pages = {021088},title = {{Fault-Tolerant Logical Measurements via Homological Measurement}},url = {https://arxiv.org/pdf/2410.02753 https://link.aps.org/doi/10.1103/PhysRevX.15.021088},volume = {15},year = {2025}}

@article{he2025extractors,archivePrefix = {arXiv},arxivId = {2503.10390},author = {He, Zhiyang and Cowtan, Alexander and Williamson, Dominic J. and Yoder, Theodore J.},isbn = {2503.10390v1},title = {{Extractors: QLDPC Architectures for Efficient Pauli-Based Computation}},url = {https://arxiv.org/pdf/2503.10390},year = {2025},journal = {arXiv preprint arXiv:2503.10390}}

@article{fahimniya2023fault,archivePrefix = {arXiv},arxivId = {2309.10033},author = {Fahimniya, Ali and Dehghani, Hossein and Bharti, Kishor and Mathew, Sheryl and Koll{\'{a}}r, Alicia J. and Gorshkov, Alexey V. and Gullans, Michael J.},title = {{Fault-tolerant hyperbolic Floquet quantum error correcting codes}},url = {https://arxiv.org/abs/2309.10033v2},year = {2023},journal = {arXiv preprint arXiv:2309.10033}}

@article{swaroop2024universal,archivePrefix = {arXiv},arxivId = {2410.03628},author = {Swaroop, Esha and Jochym-O'Connor, Tomas and Yoder, Theodore J.},title = {{Universal adapters between quantum LDPC codes}},url = {https://arxiv.org/pdf/2410.03628},year = {2024},journal = {arXiv preprint arXiv:2410.03628}}

@inproceedings{zhou2025resource,address = {New York, NY, USA},author = {Zhou, Hengyun and Duckering, Casey and Zhao, Chen and Bluvstein, Dolev and Cain, Madelyn and Kubica, Aleksander and Wang, Sheng-Tao and Lukin, Mikhail D.},booktitle = {Proceedings of the 52nd Annual International Symposium on Computer Architecture},doi = {10.1145/3695053.3731039},isbn = {9798400712616},pages = {1432--1448},publisher = {ACM},title = {{Resource Analysis of Low-Overhead Transversal Architectures for Reconfigurable Atom Arrays}},url = {https://dl.acm.org/doi/10.1145/3695053.3731039},year = {2025}}

@article{battistel2023real,archivePrefix = {arXiv},arxivId = {2303.00054v2},author = {Battistel, Francesco and Chamberland, Christopher and Johar, Kauser and Overwater, Ramon W. J. and Sebastiano, Fabio and Skoric, Luka and Ueno, Yosuke and Usman, Muhammad},doi = {10.1088/2399-1984/aceba6},title = {{Real-Time Decoding for Fault-Tolerant Quantum Computing: Progress, Challenges and Outlook}},url = {http://arxiv.org/abs/2303.00054 http://dx.doi.org/10.1088/2399-1984/aceba6},year = {2023},journal = {arXiv preprint arXiv:2303.00054v2}}

@article{strikis2026high,archivePrefix = {arXiv},arxivId = {2603.05481},author = {Strikis, Armands and Browne, Dan E. and Beverland, Michael E.},title = {{High-performance syndrome extraction circuits for quantum codes}},url = {https://arxiv.org/pdf/2603.05481},year = {2026},journal = {arXiv preprint arXiv:2603.05481}}

@article{lee2021even,author = {Lee, Joonho and Berry, Dominic W. and Gidney, Craig and Huggins, William J. and McClean, Jarrod R. and Wiebe, Nathan and Babbush, Ryan},doi = {10.1103/PRXQuantum.2.030305},journal = {PRX Quantum},number = {3},pages = {030305},publisher = {American Physical Society},title = {{Even More Efficient Quantum Computations of Chemistry Through Tensor Hypercontraction}},url = {https://journals.aps.org/prxquantum/abstract/10.1103/PRXQuantum.2.030305},volume = {2},year = {2021}}

@article{poulin2008on,archivePrefix = {arXiv},arxivId = {0801.1241},author = {Poulin, David and Chung, Yeojin},doi = {10.48550/arxiv.0801.1241},issn = {15337146},journal = {Quantum Information and Computation},number = {10},pages = {0987--1000},publisher = {Rinton Press Inc.},title = {{On the iterative decoding of sparse quantum codes}},url = {https://arxiv.org/abs/0801.1241v2},volume = {8},year = {2008}}

@article{wang2026coprime,archivePrefix = {arXiv},arxivId = {2408.10001v6},author = {Wang, Ming and Mueller, Frank},doi = {10.22331/q-2026-02-23-2009},issn = {2521-327X},journal = {Quantum},pages = {2009},publisher = {Verein zur F{\"{o}}rderung des Open Access Publizierens in den Quantenwissenschaften},title = {{Coprime Bivariate Bicycle Codes and Their Layouts on Cold Atoms}},url = {https://quantum-journal.org/papers/q-2026-02-23-2009/},volume = {10},year = {2026}}

@article{gu2026scalable,archivePrefix = {arXiv},arxivId = {2604.08358},author = {Gu, Andi and Pablo, J and Ataides, Bonilla and Lukin, Mikhail D and Yelin, Susanne F},title = {{Scalable Neural Decoders for Practical Fault-Tolerant Quantum Computation}},url = {https://arxiv.org/pdf/2604.08358},year = {2026},journal = {arXiv preprint arXiv:2604.08358}}

@article{kasai2026breaking,archivePrefix = {arXiv},arxivId = {2601.08824},author = {Kasai, Kenta},title = {{Breaking the Orthogonality Barrier in Quantum LDPC Codes}},url = {https://arxiv.org/pdf/2601.08824},year = {2026},journal = {arXiv preprint arXiv:2601.08824}}

@article{higgott2024constructions,archivePrefix = {arXiv},arxivId = {2308.03750},author = {Higgott, Oscar and Breuckmann, Nikolas P.},doi = {10.1103/PRXQuantum.5.040327},issn = {2691-3399},journal = {PRX Quantum},number = {4},pages = {040327},title = {{Constructions and Performance of Hyperbolic and Semi-Hyperbolic Floquet Codes}},url = {https://arxiv.org/abs/2308.03750v1 https://link.aps.org/doi/10.1103/PRXQuantum.5.040327},volume = {5},year = {2024}}

@article{breuckmann2021quantum,archivePrefix = {arXiv},arxivId = {2103.06309},author = {Breuckmann, Nikolas P. and Eberhardt, Jens Niklas},doi = {10.1103/prxquantum.2.040101},journal = {PRX Quantum},number = {4},publisher = {American Physical Society},title = {{Quantum Low-Density Parity-Check Codes}},url = {https://arxiv.org/abs/2103.06309v2},volume = {2},year = {2021}}

@article{delfosse2020hierarchical,archivePrefix = {arXiv},arxivId = {2001.11427},author = {Delfosse, Nicolas},title = {{Hierarchical decoding to reduce hardware requirements for quantum computing}},url = {https://arxiv.org/pdf/2001.11427},year = {2020},journal = {arXiv preprint arXiv:2001.11427}}

@article{bluvstein2024logical,archivePrefix = {arXiv},arxivId = {2312.03982},author = {Bluvstein, Dolev and Evered, Simon J. and Geim, Alexandra A. and Li, Sophie H. and Zhou, Hengyun and Manovitz, Tom and Ebadi, Sepehr and Cain, Madelyn and Kalinowski, Marcin and Hangleiter, Dominik and {Bonilla Ataides}, J. Pablo and Maskara, Nishad and Cong, Iris and Gao, Xun and {Sales Rodriguez}, Pedro and Karolyshyn, Thomas and Semeghini, Giulia and Gullans, Michael J. and Greiner, Markus and Vuleti{\'{c}}, Vladan and Lukin, Mikhail D.},doi = {10.1038/s41586-023-06927-3},issn = {0028-0836},journal = {Nature},number = {7997},pages = {58--65},pmid = {38056497},publisher = {Nature Publishing Group},title = {{Logical quantum processor based on reconfigurable atom arrays}},url = {https://www.nature.com/articles/s41586-023-06927-3},volume = {626},year = {2024}}

@article{bennett1996mixed,archivePrefix = {arXiv},arxivId = {quant-ph/9604024},author = {Bennett, Charles H. and DiVincenzo, David P. and Smolin, John A. and Wootters, William K.},doi = {10.1103/PhysRevA.54.3824},issn = {10941622},journal = {Physical Review A},number = {5},pages = {3824},pmid = {9913930},primaryClass = {quant-ph},publisher = {American Physical Society},title = {{Mixed-state entanglement and quantum error correction}},url = {https://journals.aps.org/pra/abstract/10.1103/PhysRevA.54.3824},volume = {54},year = {1996}}

@article{cain2024correlated,archivePrefix = {arXiv},arxivId = {2403.03272},author = {Cain, Madelyn and Zhao, Chen and Zhou, Hengyun and Meister, Nadine and Ataides, J. Pablo Bonilla and Jaffe, Arthur and Bluvstein, Dolev and Lukin, Mikhail D},doi = {10.1103/PhysRevLett.133.240602},issn = {0031-9007},journal = {Physical Review Letters},number = {24},pages = {240602},title = {{Correlated Decoding of Logical Algorithms with Transversal Gates}},url = {https://arxiv.org/abs/2403.03272v1 http://arxiv.org/abs/2403.03272 https://link.aps.org/doi/10.1103/PhysRevLett.133.240602},volume = {133},year = {2024}}

@article{grbic2026accelerating,archivePrefix = {arXiv},arxivId = {2602.02985},author = {Grbic, Dragana and Beni, Laleh Aghababaie and Shutty, Noah},title = {{Accelerating the Tesseract Decoder for Quantum Error Correction}},url = {http://arxiv.org/abs/2602.02985},year = {2026},journal = {arXiv preprint arXiv:2602.02985}}

@article{das2020scalable,archivePrefix = {arXiv},arxivId = {2001.06598},author = {Das, Poulami and Pattison, Christopher A. and Manne, Srilatha and Carmean, Douglas and Svore, Krysta and Qureshi, Moinuddin and Delfosse, Nicolas},doi = {10.48550/arxiv.2001.06598},title = {{A Scalable Decoder Micro-architecture for Fault-Tolerant Quantum Computing}},url = {https://arxiv.org/abs/2001.06598v1},year = {2020},journal = {arXiv preprint arXiv:2001.06598}}

@article{norcia2024iterative,archivePrefix = {arXiv},arxivId = {2401.16177},author = {Norcia, M. A. and Kim, H. and Cairncross, W. B. and Stone, M. and Ryou, A. and Jaffe, M. and Brown, M. O. and Barnes, K. and Battaglino, P. and Bohdanowicz, T. C. and Brown, A. and Cassella, K. and Chen, C. A. and Coxe, R. and Crow, D. and Epstein, J. and Griger, C. and Halperin, E. and Hummel, F. and Jones, A. M.W. and Kindem, J. M. and King, J. and Kotru, K. and Lauigan, J. and Li, M. and Lu, M. and Megidish, E. and Marjanovic, J. and McDonald, M. and Mittiga, T. and Muniz, J. A. and Narayanaswami, S. and Nishiguchi, C. and Paule, T. and Pawlak, K. A. and Peng, L. S. and Pudenz, K. L. and {Rodr{\'{i}}guez P{\'{e}}rez}, D. and Smull, A. and Stack, D. and Urbanek, M. and {Van De Veerdonk}, R. J.M. and Vendeiro, Z. and Wadleigh, L. and Wilkason, T. and Wu, T. Y. and Xie, X. and Zalys-Geller, E. and Zhang, X. and Bloom, B. J.},doi = {10.1103/PRXQuantum.5.030316},issn = {26913399},journal = {PRX Quantum},number = {3},pages = {030316},publisher = {American Physical Society},title = {{Iterative Assembly of 171 Yb Atom Arrays with Cavity-Enhanced Optical Lattices}},url = {http://arxiv.org/abs/2401.16177 https://link.aps.org/doi/10.1103/PRXQuantum.5.030316},volume = {5},year = {2024}}

@article{manetsch2024tweezer,archivePrefix = {arXiv},arxivId = {2403.12021},author = {Manetsch, Hannah J. and Nomura, Gyohei and Bataille, Elie and Leung, Kon H. and Lv, Xudong and Endres, Manuel},title = {{A tweezer array with 6100 highly coherent atomic qubits}},url = {https://arxiv.org/abs/2403.12021v2},year = {2024},journal = {arXiv preprint arXiv:2403.12021}}

@article{reiher2017elucidating,archivePrefix = {arXiv},arxivId = {1605.03590},author = {Reiher, Markus and Wiebe, Nathan and Svore, Krysta M. and Wecker, Dave and Troyer, Matthias},doi = {10.1073/pnas.1619152114},issn = {0027-8424},journal = {Proceedings of the National Academy of Sciences},number = {29},pages = {7555--7560},pmid = {28674011},publisher = {National Academy of Sciences},title = {{Elucidating reaction mechanisms on quantum computers}},url = {https://www.pnas.org/doi/abs/10.1073/pnas.1619152114 https://pnas.org/doi/full/10.1073/pnas.1619152114},volume = {114},year = {2017}}

@article{maurer2025real,archivePrefix = {arXiv},arxivId = {2510.21600},author = {Maurer, Thilo and B{\"{u}}hler, Markus and Kr{\"{o}}ner, Michael and Haverkamp, Frank and M{\"{u}}ller, Tristan and Vandeth, Drew and Johnson, Blake R.},title = {{Real-time decoding of the gross code memory with FPGAs}},url = {https://arxiv.org/pdf/2510.21600},year = {2025},journal = {arXiv preprint arXiv:2510.21600}}

@article{pause2024supercharged,archivePrefix = {arXiv},arxivId = {2310.09191},author = {Pause, Lars and Sturm, Lukas and Mittenb{\"{u}}hler, Marcel and Amann, Stephan and Preuschoff, Tilman and Sch{\"{a}}ffner, Dominik and Schlosser, Malte and Birkl, Gerhard},doi = {10.1364/optica.513551},issn = {23342536},journal = {Optica},number = {2},pages = {222},publisher = {Optica Publishing Group},title = {{Supercharged two-dimensional tweezer array with more than 1000 atomic qubits}},url = {http://arxiv.org/abs/2310.09191 http://dx.doi.org/10.1364/OPTICA.513551},volume = {11},year = {2024}}

@article{jenkins2022ytterbium,archivePrefix = {arXiv},arxivId = {2112.06732},author = {Jenkins, Alec and Lis, Joanna W. and Senoo, Aruku and McGrew, William F. and Kaufman, Adam M.},doi = {10.1103/PhysRevX.12.021027},issn = {2160-3308},journal = {Physical Review X},number = {2},pages = {021027},title = {{Ytterbium Nuclear-Spin Qubits in an Optical Tweezer Array}},url = {https://arxiv.org/abs/2112.06732v1 https://link.aps.org/doi/10.1103/PhysRevX.12.021027},volume = {12},year = {2022}}

@article{gidney2019how,archivePrefix = {arXiv},arxivId = {1905.09749},author = {Gidney, Craig and Eker{\aa}, Martin},doi = {10.22331/q-2021-04-15-433},journal = {Quantum},pages = {1--31},publisher = {Verein zur Forderung des Open Access Publizierens in den Quantenwissenschaften},title = {{How to factor 2048 bit RSA integers in 8 hours using 20 million noisy qubits}},url = {https://arxiv.org/abs/1905.09749v3},volume = {5},year = {2019}}

@article{cicali2025fast,author = {Cicali, Cristina and Calzavara, Martino and Cuestas, Eloisa and Calarco, Tommaso and Zeier, Robert and Motzoi, Felix},doi = {10.1103/7r3w-8m61},issn = {23317019},journal = {Physical Review Applied},number = {2},pages = {024070},publisher = {American Physical Society},title = {{Fast neutral-atom transport and transfer between optical tweezers}},url = {https://journals.aps.org/prapplied/abstract/10.1103/7r3w-8m61},volume = {24},year = {2025}}

@article{hwang2025fast,archivePrefix = {arXiv},arxivId = {2410.22627},author = {Hwang, Sunhwa and Hwang, Hansub and Kim, Kangjin and Byun, Andrew and Kim, Kangheun and Jeong, Seokho and Soegianto, Maynardo Pratama and Soegianto, Maynardo Pratama and Ahn, Jaewook},doi = {10.1364/OPTICAQ.546797},issn = {2837-6714},journal = {Optica Quantum, Vol. 3, Issue 1, pp. 64-71},number = {1},pages = {64--71},publisher = {Optica Publishing Group},title = {{Fast and reliable atom transport by optical tweezers}},url = {https://opg.optica.org/viewmedia.cfm?uri=opticaq-3-1-64\&seq=0\&html=true https://opg.optica.org/abstract.cfm?uri=opticaq-3-1-64 https://opg.optica.org/opticaq/abstract.cfm?uri=opticaq-3-1-64},volume = {3},year = {2025}}

@article{scruby2024high,archivePrefix = {arXiv},arxivId = {2406.14445},author = {Scruby, Thomas R. and Hillmann, Timo and Roffe, Joschka},title = {{High-threshold, low-overhead and single-shot decodable fault-tolerant quantum memory}},url = {https://arxiv.org/pdf/2406.14445 http://arxiv.org/abs/2406.14445},year = {2024},journal = {arXiv preprint arXiv:2406.14445}}

@article{viszlai2023matching,archivePrefix = {arXiv},arxivId = {2311.16980},author = {Viszlai, Joshua and Yang, Willers and Lin, Sophia Fuhui and Liu, Junyu and Nottingham, Natalia and Baker, Jonathan M. and Chong, Frederic T.},title = {{Matching Generalized-Bicycle Codes to Neutral Atoms for Low-Overhead Fault-Tolerance}},url = {https://arxiv.org/abs/2311.16980v1},year = {2023},journal = {arXiv preprint arXiv:2311.16980}}

@article{breuckmann2020balanced,archivePrefix = {arXiv},arxivId = {2012.09271v3},author = {Breuckmann, Nikolas P. and Eberhardt, Jens N.},doi = {10.1109/TIT.2021.3097347},journal = {IEEE Transactions on Information Theory},number = {10},pages = {6653--6674},publisher = {Institute of Electrical and Electronics Engineers Inc.},title = {{Balanced Product Quantum Codes}},url = {http://arxiv.org/abs/2012.09271 http://dx.doi.org/10.1109/TIT.2021.3097347},volume = {67},year = {2020}}

@article{williamson2024low,archivePrefix = {arXiv},arxivId = {2410.02213},author = {Williamson, Dominic J. and Yoder, Theodore J.},title = {{Low-overhead fault-tolerant quantum computation by gauging logical operators}},url = {https://arxiv.org/abs/2410.02213v1 http://arxiv.org/abs/2410.02213},year = {2024},journal = {arXiv preprint arXiv:2410.02213}}

@techreport{etsi,author = {ETSI},institution = {European Telecommunications Standards Institute},month = {jul},number = {ETSI TS 138 212 V18.7.0},title = {{5G; NR; Multiplexing and channel coding (3GPP TS 38.212 version 18.7.0 Release 18)}},type = {ETSI Technical Specification},url = {https://www.etsi.org/deliver/etsi_ts/138200_138299/138212/18.07.00_60/ts_138212v180700p.pdf},year = {2025}}

@article{calderbank1996good,archivePrefix = {arXiv},arxivId = {quant-ph/9512032},author = {Calderbank, A. R. and Shor, Peter W.},doi = {10.1103/PhysRevA.54.1098},issn = {10941622},journal = {Physical Review A},number = {2},pages = {1098},pmid = {9913578},primaryClass = {quant-ph},publisher = {American Physical Society},title = {{Good quantum error-correcting codes exist}},url = {https://journals.aps.org/pra/abstract/10.1103/PhysRevA.54.1098},volume = {54},year = {1996}}

@article{steane1996error,author = {Steane, A. M.},doi = {10.1103/PhysRevLett.77.793},issn = {10797114},journal = {Physical Review Letters},number = {5},pages = {793},publisher = {American Physical Society},title = {{Error Correcting Codes in Quantum Theory}},url = {https://journals.aps.org/prl/abstract/10.1103/PhysRevLett.77.793},volume = {77},year = {1996}}

@article{gidney2024magic,archivePrefix = {arXiv},arxivId = {2409.17595},author = {Gidney, Craig and Shutty, Noah and Jones, Cody},title = {{Magic state cultivation: growing T states as cheap as CNOT gates}},url = {https://arxiv.org/abs/2409.17595v1},year = {2024},journal = {arXiv preprint arXiv:2409.17595}}

@article{kitaev2003fault,archivePrefix = {arXiv},arxivId = {quant-ph/9707021},author = {Kitaev, A. Yu},doi = {10.1016/S0003-4916(02)00018-0},issn = {00034916},journal = {Annals of Physics},number = {1},pages = {2--30},primaryClass = {quant-ph},publisher = {Academic Press},title = {{Fault-tolerant quantum computation by anyons}},volume = {303},year = {2003}}

@article{zhou2025opportunities,author = {Zhou, Hengyun and Cain, Madelyn and Lukin, Mikhail D.},doi = {10.1038/s43588-025-00895-6},issn = {2662-8457},journal = {Nature Computational Science 2025 5:12},number = {12},pages = {1110--1119},publisher = {Nature Publishing Group},title = {{Opportunities in full-stack design of low-overhead fault-tolerant quantum computation}},url = {https://www.nature.com/articles/s43588-025-00895-6},volume = {5},year = {2025}}

@online{ibm2023condor,author = {Gambetta, Jay},institution = {IBM Quantum},month = {dec},title = {{The Hardware and Software for the Era of Quantum Utility Is Here}},url = {https://www.ibm.com/quantum/blog/quantum-roadmap-2033},year = {2023}}

@article{panteleev2019degenerate,author = {Panteleev, Pavel and Kalachev, Gleb},doi = {10.22331/q-2021-11-22-585},issn = {2521327X},journal = {Quantum},pages = {585},title = {{Degenerate Quantum LDPC Codes With Good Finite Length Performance}},url = {https://arxiv.org/abs/1904.02703v3 https://quantum-journal.org/papers/q-2021-11-22-585/},volume = {5},year = {2019}}

@article{webster2026pinnacle,archivePrefix = {arXiv},arxivId = {2602.11457},author = {Webster, Paul and Berent, Lucas and Chandra, Omprakash and Hockings, Evan T. and Baspin, Nou{\'{e}}dyn and Thomsen, Felix and Smith, Samuel C. and Cohen, Lawrence Z.},title = {{The Pinnacle Architecture: Reducing the cost of breaking RSA-2048 to 100 000 physical qubits using quantum LDPC codes}},url = {https://arxiv.org/pdf/2602.11457},year = {2026},journal = {arXiv preprint arXiv:2602.11457}}

@article{bluvstein2022quantum,archivePrefix = {arXiv},arxivId = {2112.03923},author = {Bluvstein, Dolev and Levine, Harry and Semeghini, Giulia and Wang, Tout T. and Ebadi, Sepehr and Kalinowski, Marcin and Keesling, Alexander and Maskara, Nishad and Pichler, Hannes and Greiner, Markus and Vuleti{\'{c}}, Vladan and Lukin, Mikhail D.},doi = {10.1038/s41586-022-04592-6},issn = {14764687},journal = {Nature},number = {7906},pages = {451--456},pmid = {35444318},publisher = {Nature Publishing Group},title = {{A quantum processor based on coherent transport of entangled atom arrays}},url = {https://www.nature.com/articles/s41586-022-04592-6},volume = {604},year = {2022}}

@article{zhang2025leveraging,archivePrefix = {arXiv},arxivId = {2506.13724},author = {Zhang, Bichen and Liu, Genyue and Bornet, Guillaume and Horvath, Sebastian P. and Peng, Pai and Ma, Shuo and Huang, Shilin and Puri, Shruti and Thompson, Jeff D.},title = {{Leveraging erasure errors in logical qubits with metastable $^{171}$Yb atoms}},url = {https://arxiv.org/pdf/2506.13724},year = {2025},journal = {arXiv preprint arXiv:2506.13724}}

@article{zheng2025high,archivePrefix = {arXiv},arxivId = {2510.08523},author = {Zheng, Guo and Jiang, Liang and Xu, Qian},title = {{High-Rate Surgery: towards constant-overhead logical operations}},url = {https://arxiv.org/pdf/2510.08523},year = {2025},journal = {arXiv preprint arXiv:2510.08523}}

@article{cowtan2025fast,archivePrefix = {arXiv},arxivId = {2510.14895},author = {Cowtan, Alexander and He, Zhiyang and Williamson, Dominic J. and Yoder, Theodore J.},title = {{Fast and fault-tolerant logical measurements: Auxiliary hypergraphs and transversal surgery}},url = {https://arxiv.org/pdf/2510.14895},year = {2025},journal = {arXiv preprint arXiv:2510.14895}}

@article{chang2026constant,archivePrefix = {arXiv},arxivId = {2603.02157},author = {Chang, Kathleen and He, Zhiyang and Yoder, Theodore J. and Zhu, Guanyu and Jochym-O'Connor, Tomas},title = {{Constant-Time Surgery on 2D Hypergraph Product Codes with Near-Constant Space Overhead}},url = {https://arxiv.org/pdf/2603.02157},year = {2026},journal = {arXiv preprint arXiv:2603.02157}}

@article{xu2025batched,archivePrefix = {arXiv},arxivId = {2510.06159},author = {Xu, Qian and Zhou, Hengyun and Bluvstein, Dolev and Cain, Madelyn and Kalinowski, Marcin and Preskill, John and Lukin, Mikhail D. and Maskara, Nishad},title = {{Batched high-rate logical operations for quantum LDPC codes}},url = {https://arxiv.org/pdf/2510.06159},year = {2025},journal = {arXiv preprint arXiv:2510.06159}}

@article{beverland2025fail,archivePrefix = {arXiv},arxivId = {2511.15177},author = {Beverland, Michael E. and Carroll, Malcolm and Cross, Andrew W. and Yoder, Theodore J.},title = {{Fail fast: techniques to probe rare events in quantum error correction}},url = {https://arxiv.org/pdf/2511.15177},year = {2025},journal = {arXiv preprint arXiv:2511.15177}}

@article{menon2025magic,archivePrefix = {arXiv},arxivId = {2508.10714},author = {Menon, Varun and Bonilla-Ataides, J. Pablo and Mehta, Rohan and Gu, Andi and Tan, Daniel Bochen and Lukin, Mikhail D.},title = {{Magic tricycles: Efficient magic state generation with finite block-length quantum LDPC codes}},url = {https://arxiv.org/pdf/2508.10714},year = {2025},journal = {arXiv preprint arXiv:2508.10714}}

@inproceedings{panteleev2022asymptotically,archivePrefix = {arXiv},arxivId = {2111.03654},author = {Panteleev, Pavel and Kalachev, Gleb},booktitle = {Proceedings of the Annual ACM Symposium on Theory of Computing},doi = {10.1145/3519935.3520017},isbn = {9781450392648},issn = {07378017},pages = {375--388},publisher = {Association for Computing Machinery},title = {{Asymptotically good Quantum and locally testable classical LDPC codes}},url = {https://dl.acm.org/doi/10.1145/3519935.3520017},year = {2022},journal = {arXiv preprint arXiv:2111.03654}}

@article{sales2025experimental,archivePrefix = {arXiv},arxivId = {2412.15165},author = {{Sales Rodriguez}, Pedro and Robinson, John M and Jepsen, Paul Niklas and He, Zhiyang and Duckering, Casey and Zhao, Chen and Wu, Kai Hsin and Campo, Joseph and Bagnall, Kevin and Kwon, Minho and Karolyshyn, Thomas and Weinberg, Phillip and Cain, Madelyn and Evered, Simon J and Geim, Alexandra A and Kalinowski, Marcin and Li, Sophie H and Manovitz, Tom and Amato-Grill, Jesse and Basham, James I and Bernstein, Liane and Braverman, Boris and Bylinskii, Alexei and Choukri, Adam and DeAngelo, Robert J. and Fang, Fang and Fieweger, Connor and Frederick, Paige and Haines, David and Hamdan, Majd and Hammett, Julian and Hsu, Ning and Hu, Ming Guang and Huber, Florian and Jia, Ningyuan and Kedar, Dhruv and Kornja{\v{c}}a, Milan and Liu, Fangli and Long, John and Lopatin, Jonathan and Lopes, Pedro L.S. and Luo, Xiu Zhe and Macr{\`{i}}, Tommaso and Markovi{\'{c}}, Ognjen and Mart{\'{i}}nez-Mart{\'{i}}nez, Luis A and Meng, Xianmei and Ostermann, Stefan and Ostroumov, Evgeny and Paquette, David and Qiang, Zexuan and Shofman, Vadim and Singh, Anshuman and Singh, Manuj and Sinha, Nandan and Thoreen, Henry and Wan, Noel and Wang, Yiping and Waxman-Lenz, Daniel and Wong, Tak and Wurtz, Jonathan and Zhdanov, Andrii and Zheng, Laurent and Greiner, Markus and Keesling, Alexander and Gemelke, Nathan and Vuleti{\'{c}}, Vladan and Kitagawa, Takuya and Wang, Sheng Tao and Bluvstein, Dolev and Lukin, Mikhail D. and Lukin, Alexander and Zhou, Hengyun and Cant{\'{u}}, Sergio H},doi = {10.1038/s41586-025-09367-3},issn = {14764687},journal = {Nature},number = {8081},pages = {620--625},pmid = {40659049},title = {{Experimental demonstration of logical magic state distillation}},url = {https://arxiv.org/abs/2412.15165v1 http://arxiv.org/abs/2412.15165},volume = {645},year = {2025}}

@article{graham2022multi1,archivePrefix = {arXiv},arxivId = {2112.14589},author = {Graham, T. M. and Song, Y. and Scott, J. and Poole, C. and Phuttitarn, L. and Jooya, K. and Eichler, P. and Jiang, X. and Marra, A. and Grinkemeyer, B. and Kwon, M. and Ebert, M. and Cherek, J. and Lichtman, M. T. and Gillette, M. and Gilbert, J. and Bowman, D. and Ballance, T. and Campbell, C. and Dahl, E. D. and Crawford, O. and Blunt, N. S. and Rogers, B. and Noel, T. and Saffman, M.},doi = {10.1038/s41586-022-04603-6},issn = {0028-0836},journal = {Nature},number = {7906},pages = {457--462},pmid = {35444321},title = {{Multi-qubit entanglement and algorithms on a neutral-atom quantum computer}},url = {https://arxiv.org/abs/2112.14589v3 http://arxiv.org/abs/2112.14589 http://dx.doi.org/10.1038/s41586-022-04603-6 https://www.nature.com/articles/s41586-022-04603-6},volume = {604},year = {2022}}
